\newcolumntype{P}[1]{>{\centering\arraybackslash}p{#1}}
\newcolumntype{M}[1]{>{\centering\arraybackslash}m{#1}}
\begin{document}

\title{Power of Finite Memory and Finite Communication Robots under Asynchronous Scheduler }
\author{Archak Das \and Avisek Sharma  \and Buddhadeb Sau}

\institute{Department of Mathematics, Jadavpur University, Kolkata, India \\
\email{\{archakdas.math.rs, aviseks.math.rs,  buddhadeb.sau\}@jadavpuruniversity.in}
}
\maketitle

\begin{abstract}
In swarm robotics, a set of robots has to perform a given task with specified internal capabilities (model) and under a given adversarial scheduler. Relation between a model $M_1$ under scheduler $S_1$, and that of a model $M_2$ under scheduler $S_2$ can be of four different types: not less powerful, more powerful, equivalent and orthogonal. In literature there are four main models of robots with lights: $\mathcal{LUMI}$, where robots have the power of observing the lights of all the robots, $\mathcal{FSTA}$ , where each robot can see only its own light, $\mathcal{FCOM}$, where each robot can observe the light of all other robots except its own and $\mathcal{OBLOT}$, where the robots do not have any light.  In this paper, we investigate the computational power of   $\mathcal{FSTA}$  and $\mathcal{FCOM}$  model under asynchronous scheduler by comparing it with other model and scheduler combinations. Our main focus is to understand and compare the power of persistent memory and explicit communication in robots under asynchronous scheduler.

\end{abstract}

\keywords{Look-Compute-Move, Oblivious Mobile Robots, Robots with Lights, Memory versus Communication}

\section{Introduction}
A large amount of research in the area of distributed computing has been devoted to the study of computational capabilities of autonomous mobile entities called $robots$ operating in an Euclidean plane. Each of the robots can freely move through out the space, and each are endowed with its individual co-ordinate system and operate in $Look-Compute-Move$ (LCM) cycles.   One of the main research focus has been to understand the extent of computational power of robots with minimum possible internal capabilities such as memory or communication. Some of the most notable tasks are Gathering \cite{ FlocchiniPSW05, IzumiSKIDWY12, SuzukiY99},  Pattern Formation \cite{ SuzukiY99, YamashitaS10, YamauchiUKY17}, etc.

An external factor which plays a fundamental role in determining the computational power of a swarm of mobile robots is the $activation$  $schedule$. With respect to this factor there are three different settings: $fully-synchronous$ or $FSYNCH$, $semi-synchronous$ or $SSYNCH$, and $asynchronous$ or $ASYNCH$. In $asynchronous$ setting there is no common notion of time, while $synchronous$ setting time is divided into discrete intervals called $rounds$. Among the synchronous schedulers if in each round all the robots are activated then the scheduler is said to be $fully-synchronous$. If there is no such constraint, the scheduler is said to be $semi-synchronous$.  

 In addition to external capabilities the robots can be further differentiated based on the presence and absence of internal capabilities such as memory or communication. Our main focus in this paper are  $robots$ $with$ $light$, where the lights act as a basis of memory or communication. There are four main models based on different labels of internal capabilities of memory and communication of the robots: $\mathcal{OBLOT}$, $\mathcal{LUMI}$, $\mathcal{FSTA}$, and $\mathcal{FCOM}$. In $\mathcal{OBLOT}$ model robots neither have persistent memory nor have the power of communication. In the $\mathcal{LUMI}$ model the robots are equipped with a constant-sized memory (called $light$), whose value can be set in the $Compute$ phase. The light is visible to all the robots and is persistent. Hence, the robots are able to remember and communicate a constant number of bits through the light. Now $\mathcal{FSTA}$ and $\mathcal{FCOM}$ are two sub-models of $\mathcal{LUMI}$ where in the former each robot can see only its own light and in the latter, the robot can see the lights of all the other robot except its own. Thus in $\mathcal{FSTA}$ the robots are $silent$ but has a constant amount of persistent memory, while in $\mathcal{FCOM}$ the robots are $oblivious$ but can communicate a constant number of bits. $\mathcal{X}^{Y}$ denotes model $\mathcal{X}$ under scheduler $Y$. $A>B$ indicates that model $A$ is computationally more powerful than model $B$, $A \equiv B$ denotes that they are computationally equivalent, $A \perp B$ denotes that they are computationally incomparable.
 
\paragraph{Previous Works}

 It has been shown in \cite{SuzukiY99} that within $\mathcal{OBLOT}$, robots in $FSYNCH$ are strictly more powerful than those in $SSYNCH$. In \cite{Daslumi16} it has been shown that within $\mathcal{LUMI}$, robots in $ASYNCH$ has the computational power as the robots in $SSYNCH$, and that asynchronous luminous robots are strictly more powerful than oblivious synchronous robots. 
 Recently in \cite{Flocchiniopodis}, the authors provided a complete exhaustive map of computational relationships among various models given in the tables \ref{tab:1}, \ref{tab:2}, \ref{tab:3}, but only under $FSYNCH$ and $SSYNCH$ scheduler. In addition to that they considered the question that which of the internal capabilities is more important, memory or communication. The authors examined this question through the lens of various computational results they obtained and came to the conclusion that the answer depends on the type of scheduler, i.e., communication is more powerful than persistent memory if the scheduler is fully synchronous and are incomparable under the semi-synchronous.
 In \cite{peterscomparison} the author further expanded the work of \cite{Flocchiniopodis}. The author investigated how the models relate to each other in a single robot system and  also showed that the results obtained for rigid robots with same chirality in tables \ref{tab:1} and \ref{tab:2} are the same if non-rigid robots are considered. The author also investigated the point between \textsc{Ssynch} and \textsc{Fsynch} where the strict dominance of $\mathcal{LUMI}$ over $\mathcal{FCOM}$ turns into equivalence. 
 \begin{table}[ht!]
 \footnotesize
 \begin{minipage}[b]{0.45\linewidth}\centering
     
     \begin{tabular}{|c|c|c|c|}
     \hline
            & $\mathcal{FCOM}^F$ & $\mathcal{FSTA}^F$ &  $\mathcal{OBLOT}^F$ \\
          \hline
           $\mathcal{LUMI}^F$& $\equiv$  & $>$ & $>$\\
          \hline
          $\mathcal{FCOM}^F$ & $-$ & $>$ & $>$\\
          \hline
          $\mathcal{FSTA}^F$ & $-$ & $-$ & $>$\\
          \hline
     \end{tabular}
     \caption{Relationships within \textsc{Fsynch} \cite{Flocchiniopodis} }
     \label{tab:1}
     \end{minipage}
     \hfill
 \begin{minipage}[b]{0.45\linewidth}\centering
     
     \begin{tabular}{|c|c|c|c|}
     \hline
          & $\mathcal{FCOM}^S$ & $\mathcal{FSTA}^S$ & $\mathcal{OBLOT}^S$  \\
          \hline
          $\mathcal{LUMI}^S$ & $>$ & $>$ & $>$\\
          \hline
          $\mathcal{FCOM}^S$ & $-$ & $\perp$ & $>$\\
          \hline
          $\mathcal{FSTA}^S$ & $-$ & $-$ & $>$\\
          \hline
     \end{tabular}
     \caption{Relationships within \textsc{Ssynch} \cite{Flocchiniopodis}}
     \label{tab:2}
     \end{minipage}
 \end{table}
 
 \begin{table}[ht!]
 \footnotesize
     \centering
     \begin{tabular}{|c|c|c|c|c|}
     \hline
          & $\mathcal{LUMI}^S$ & $\mathcal{FCOM}^S$ & $\mathcal{FSTA}^S$ & $\mathcal{OBLOT}^S$  \\
          \hline
          $\mathcal{LUMI}^F$ $\equiv$ $\mathcal{FCOM}^F$ & $>$ & $>$ & $>$ & $>$\\
          \hline
          $\mathcal{FSTA}^F$ & $\perp$ & $\perp$ & $>$ & $>$\\
          \hline
          $\mathcal{OBLOT}^F$ & $\perp$ & $\perp$ & $\perp$ & $>$\\
          \hline
     \end{tabular}
     \caption{Relationship between \textsc{Fsynch} and \textsc{Ssynch} \cite{Flocchiniopodis}}
     \label{tab:3}
 \end{table}

\paragraph{Our Contributions}
  In this paper, we partially extend the work of \cite{Flocchiniopodis} under $ASYNCH$ scheduler as well. We first observe that, when we consider the four models under asynchronous scheduler, some of the results follow in a straight-forward manner from the previous results established in the papers \cite{Flocchiniopodis,Daslumi16, Conrend}. For, some of the remaining cases which requires a deeper investigation we give non-trivial proofs. For example we prove that $\mathcal{FSTA}^{A}$ $\perp$ $\mathcal{OBLOT}^{F}$, and we prove that $\mathcal{FCOM}^{A}$ $\perp$ $\mathcal{OBLOT}^{F}$, which is a clear refinement of the recently established result that $\mathcal{LUMI}^{A}$ $\perp$ $\mathcal{OBLOT}^{F}$ \cite{Flocchiniopodis}. To prove $\mathcal{FSTA}^{A}$ $\perp$ $\mathcal{OBLOT}^{F}$ we introduce a new problem $Oscillating$ $Configurations$ while to prove $\mathcal{FCOM}^{A}$ $\perp$ $\mathcal{OBLOT}^{F}$, we provide a new asynchronous algorithm for the problem $-IL$ introduced in  \cite{Flocchiniopodis}. By designing an algorithm for $-IL$ in  $\mathcal{FCOM}^{A}$ we also prove that $\mathcal{FCOM}^{A}$ $\perp$ $\mathcal{FSTA}^{F}$, which is in contrast to the fact that $\mathcal{FCOM}^{F}$ $>$ $\mathcal{FSTA}^{A}$, which is easily deduced from some previous results. We introduce a new problem $Independent$ $Oscillating$ $Problem$ and prove that $\mathcal{FCOM}^{S} \perp \mathcal{FSTA}^{A}$ which in turn implies that $\mathcal{FCOM}^{A} \perp \mathcal{FSTA}^{A}$. In \cite{Flocchiniopodis}, the question was raised that whether ``\textit{ it was better to remember or to communicate?}". Thus we have shown here that in the case of asynchronous scheduler, finite memory model is incomparable to finite communication model.

\section{Model and Technical Preliminaries}

The model and setting in this paper is same as that of \cite{Flocchiniopodis}. The System considered in this paper consists of a team $R$ of computational entities called $robots$ moving and operating in the Euclidean plane $\mathbb{R}^2$. The robots are viewed as points in the Euclidean plane and has its own local co-ordinate system which may not agree with other robots. It always perceives itself at the origin. The robots are $identical$ (indistinguishable) and $autonomous$ (lacks a central control).At any point in time, a robot is either $active$ or $inactive$. When active, a robot executes a $Look-Compute-Move$ $(LCM)$ cycle performing the following three operations:
\begin{enumerate}
    \item $Look$: The robot activates its sensors to obtain a snapshot of the positions occupied by robots with respect to its own co-ordinate system.
    \item $Compute$: The robot executes its algorithm using the snapshot as input. The result of the computation is a destination point.
    \item $Move$: The robot moves to the computed destination, i.e., we assume a $rigid$ $mobility$. If the destination is the current location, the robot stays still.
\end{enumerate}

When inactive, a robot is idle. All robots are initially idle. The amount of time to complete a cycle is assumed to be finite, and the $Look$ operation is assumed to be instantaneous. We assume that the robots agree on the same circular orientation of the plane, i.e., there is $chirality$.

In the $\mathcal{OBLOT}$ model, the robots are $silent$ and $oblivious$. By $silent$ we mean that the robots have no explicit means of communication, and by $oblivious$ we mean that, at the start of a cycle, a robot has no memory of observations and computations performed in previous cycles.

In the $\mathcal{LUMI}$ model, each robot $r$ is equipped with a persistent visible state variable $Light[r]$, called $light$, whose values are taken from a finite set $C$ of states called colors. The colors of the lights can be set in each cycle by $r$ at the end of its $Compute$ operation. A light is persistent from one computational cycle to the next and can be seen by all the robots. The robot is otherwise oblivious forgetting all other information from previous cycles. In $\mathcal{LUMI}$, the $Look$ operation produces the set of pairs $(position, color)$ of the other robots. It can be clearly understood that the lights simultaneously provide persistent memory and direct means of communication. 

$\mathcal{FSTA}$ is a sub-model of $\mathcal{LUMI}$ having only the property of persistent memory. Here the robots can only see the color of its own light i.e. the light is an $internal$ $light$. As a result the robots are $silent$ as in $\mathcal{OBLOT}$ but $finite-state$. 

$\mathcal{FCOM}$ is a sub-model of $\mathcal{LUMI}$ having only the property of external communication. Here the lights of the robot are $external$ $light$ i.e. a robot cannot see the color of its own light but can see the color of the lights of the other robots. As a result the robots are $oblivious$ as in $\mathcal{OBLOT}$ but are $finite-communication$.

With respect to the activation schedule of the robots, there are three kind of schedulers. In the $asynchronous$ (ASYNCH) scheduler, there is no common notion of time, each robot is activated independently of others, the duration of each phase is finite but unpredictable and might be different in different cycles. In the $semi-synchronous$ (SSYNCH) setting, time is divided into discrete intervals, called $rounds$, in each round some robots are activated simultaneously, and perform their $LCM$ cycles in perfect synchronization. The $fully-synchronous$ (FSYNCH) setting, is same as SSYNCH except the fact that each robot is activated in every round. In all of these settings, the selection of which robots are activated at a round is made by an $adversarial$ $scheduler$, which must be $fair$, i.e., every robot must be activated infinitely often.

The Computational Relationships we now define below are exactly the same as \cite{Flocchiniopodis}. For the sake of completion of the paper we  re-write it.

Let $\mathcal{M}$ $=$ $\{ \mathcal{LUMI, FSTA, FCOM, OBLOT} \}$ be the set of models, and $\mathcal{S}$ $=$ $\{ FSYNCH, SSYNCH, ASYNCH \}$ the set of schedulers. $\mathcal{R}$ denotes the set of all team of robots satisfying core assumptions (i.e, they are identical, autonomous and operate in $LCM$ cycles) and $R$ $\in$ $\mathcal{R}$, a team of robots having identical capabilities. By $\mathcal{R}_n$ $\in$ $\mathcal{R}$ we denote the set of all teams of size $n$.

Given a model $M$ $\in$ $\mathcal{M}$, a scheduler $S$ $\in$ $\mathcal{S}$, and a team of robots $R$ $\in$ $\mathcal{R}$, let $Task(M,S;R)$ denote the set of problems solvable by $R$ in $M$ under adversarial scheduler $S$.

Let $M_1$, $M_2$ $\in$ $\mathcal{M}$ and $S_1$, $S_2$ $\in$ $\mathcal{S}$. We define the relationships between model $M_1$ under scheduler $S_1$ and model $M_2$ under scheduler $S_2$:

\begin{itemize}
    \item $computationally$ $not$ $less$ $powerful$ ($M_1^{S_1}$ $\geq$ $M_2^{S_2}$), if $\forall$ $R \in \mathcal{R}$ we have $Task(M_1, S_1;R)$ $\supseteq$ $Task(M_2, S_2;R)$;

    \item $computationally$ $more$ $powerful$ ($M_1^{S_1}$ $>$ $M_2^{S_2}$), if $M_1^{S_1}$ $\geq$ $M_2^{S_2}$ and $\exists R \in \mathcal{R}$ such that $Task(M_1, S_1;R)$ $\setminus$ $Task(M_2, S_2;R)$ $\neq$ $\emptyset$;
    
    \item $computationally$ $equivalent$ ($M_1^{S_1}$ $\equiv$ $M_2^{S_2}$), if $M_1^{S_1}$ $\geq$ $M_2^{S_2}$ and $M_1^{S_1}$ $\leq$ $M_2^{S_2}$;

    \item $computationally$ $orthogonal$ $or$ $incomparable$, ($M_1^{S_1}$ $\perp$ $M_2^{S_2}$), if $\exists R_1, R_2 \in \mathcal{R}$ such that $Task(M_1, S_1;R_1)$ $\setminus$ $Task(M_2, S_2;R_1)$ $\neq$ $\emptyset$ and $Task(M_2, S_2;R_2)$ $\setminus$ $Task(M_1, S_1;R_2)$ $\neq$ $\emptyset$.
\end{itemize}

For simplicity of notation, for a model $M$ $\in$ $\mathcal{M}$, let $M^{F}$ and $M^{S}$ denote $M^{Fsynch}$ and $M^{Ssynch}$, respectively; and let $M^{F}(R)$ and $M^{S}(R)$ denote the sets $Task(M,FSYNCH;R)$ and $Task(M,SSYNCH;R)$, respectively.

Trivially, for any $M$ $\in$ $\mathcal{M}$, $M^{F} \geq M^{S} \geq M^{A}$; also for any $S$ $\in$ $\mathcal{S}$, $\mathcal{LUMI}^{S} \geq \mathcal{FSTA}^{S} \geq \mathcal{OBLOT}^{S}$ and $\mathcal{LUMI}^{S} \geq \mathcal{FCOM}^{S} \geq \mathcal{OBLOT}^{S}$.

\section{Obvious Deductions from previous algorithms}\label{3.2}
A number of results has been established previously in the papers \cite{Flocchiniopodis,Daslumi16, Conrend} but a most of them are considering $FSYNCH$ and $SSYNCH$ schedulers only. When we change the scheduler to $asynchronous$ many new results follow easily :
\begin{itemize}
    \item (D1) $\mathcal{FSTA}^{F}$ $>$ $\mathcal{OBLOT}^{A}$ 
    \item (D2) $\mathcal{FCOM}^{F}$ $>$ $\mathcal{OBLOT}^{A}$
    \item (D3) $\mathcal{LUMI}^{F}$ $>$ $\mathcal{OBLOT}^{A}$
    \item (D4) $\mathcal{LUMI}^{S}$ $>$ $\mathcal{OBLOT}^{A}$
    \item (D5) $\mathcal{FCOM}^{F}$ $>$ $\mathcal{FSTA}^{A}$
   
    \item (D6) $\mathcal{FSTA}^{F}$ $>$ $\mathcal{FSTA}^{A}$
    \item (D7) $\mathcal{LUMI}^{A}$ $>$ $\mathcal{FSTA}^{A}$
    \item (D8) $\mathcal{LUMI}^{S}$ $>$ $\mathcal{FSTA}^{A}$
    \item (D9) $\mathcal{LUMI}^{F}$ $>$ $\mathcal{FSTA}^{A}$
    
    \item (D10) $\mathcal{FCOM}^{F}$ $>$ $\mathcal{FCOM}^{A}$
    \item (D11) $\mathcal{LUMI}^{A}$ $>$ $\mathcal{FCOM}^{A}$
    \item (D12) $\mathcal{LUMI}^{S}$ $>$ $\mathcal{FCOM}^{A}$
    \item (D13) $\mathcal{LUMI}^{F}$ $>$ $\mathcal{FCOM}^{A}$
    \item (D14) $\mathcal{FCOM}^{F}$ $>$ $\mathcal{LUMI}^{A}$
    \item (D15) $\mathcal{LUMI}^{F}$ $>$ $\mathcal{LUMI}^{A}$
    \item (D16) $\mathcal{FSTA}^{S}$ $<$ $\mathcal{LUMI}^{A}$
    \item (D17) $\mathcal{FCOM}^{S}$ $<$ $\mathcal{LUMI}^{A}$
\end{itemize}

\begin{proof}
\begin{enumerate}
    \item (D1) and (D2) follows from the fact that $\mathcal{FSTA}^{F}$ $>$ $\mathcal{OBLOT}^{F}$ and $\mathcal{FCOM}^{F}$ $>$ $\mathcal{OBLOT}^{F}$ \cite{Flocchiniopodis} and $\mathcal{OBLOT}^{F}$ $>$ $\mathcal{OBLOT}^{A}$ \cite{SuzukiY99}
    \item (D3) and (D4) follows from the fact that $\mathcal{LUMI}^{A}$ $>$ $\mathcal{OBLOT}^{A}$ \cite{Daslumi16}
    \item (D5) folows from the fact that $\mathcal{FCOM}^{F}$ $>$ $\mathcal{FSTA}^{S}$ \cite{Flocchiniopodis}.  (D6) follows from the fact that $\mathcal{FSTA}^{F}$ $>$ $\mathcal{FSTA}^{S}$ \cite{Flocchiniopodis}.
    \item (D7) follows from $\mathcal{LUMI}^{S}$ $\equiv$ $\mathcal{LUMI}^{A}$ and $\mathcal{LUMI}^{S}$ $>$ $\mathcal{FSTA}^{S}$  \cite{Flocchiniopodis}. (D8) and (D9) follows from (D7).
    \item (D10) follows from the fact that $\mathcal{FCOM}^{F}$ $>$ $\mathcal{FCOM}^{S}$ \cite{Flocchiniopodis}.
    \item (D12) follows from $\mathcal{LUMI}^{S}$ $>$ $\mathcal{FCOM}^{S}$ \cite{Flocchiniopodis}. (D11) follows  from $\mathcal{LUMI}^{S}$ $\equiv$ $\mathcal{LUMI}^{A}$ and (D12). (D13) follows from (D11) and (D12).
    \item (D14), (D15) follows from $\mathcal{FCOM}^{F}$ $\equiv$ $\mathcal{LUMI}^{F}$ and $\mathcal{LUMI}^{F}$ $>$ $\mathcal{LUMI}^{S}$ \cite{Flocchiniopodis}.
    \item (D16), (D17) follows from the fact that $\mathcal{LUMI}^{S}$ $\equiv$ $\mathcal{LUMI}^{A}$ and $\mathcal{LUMI}^{S}$ $>$ $\mathcal{FSTA}^{S}$, $\mathcal{LUMI}^{S}$ $>$ $\mathcal{FCOM}^{S}$ \cite{Flocchiniopodis}.
\end{enumerate}
\qed
\end{proof}

There are still some non-trivial relationships whose nature still remained unresolved and which cannot be deduced from results of the previous papers. In the next section, in sub-section \ref{4.1} we prove that $\mathcal{FSTA}^{A}$ is orthogonal to $\mathcal{OBLOT}^{F}$, and strictly powerful than $\mathcal{OBLOT}^{A}$. In sub-section \ref{4.2} we prove that $\mathcal{FCOM}^{A}$ is orthogonal to $\mathcal{OBLOT}^{F}$, $\mathcal{FSTA}^{F}$ and $\mathcal{FSTA}^{S}$.

\section{Computational Power of $\mathcal{FSTA}$ and $\mathcal{FCOM}$ in Asynch}

\subsection{Problem Oscillating Configurations}\label{4.1}

We first investigate the computational relationship between $\mathcal{OBLOT}^{F}$ and $\mathcal{FSTA}^{A}$. It has already been shown in  \cite{Flocchiniopodis} that the $SRO$ (Shrinking Rotation) problem can be solved in $\mathcal{OBLOT}^{F}$ but not in $\mathcal{FSTA}^{S}$. Hence $SRO$ problem cannot be solved in $\mathcal{FSTA}^{A}$. Now we show that, there exists a problem which can be solved in $\mathcal{FSTA}^{A}$ but not in $\mathcal{OBLOT}^{F}$. We now define the problem:

\begin{definition}
Problem $OC$ (Oscillating Configurations): Four robots $x,y,z$ and $t$ are placed initially in Configuration-I represented by Figure (I). The problem $OC$ requires the robots to move to continuously alternate the configuration of the robots in the sequence I-II-III-II-I-II-...., i.e.\, the robots start at Configuration-I, form Configuration-II, then forms Configuration-III, then again forms Configuration-II,... and the process is continuous.
\end{definition}

\begin{figure}[htb!]
\centering
\fontsize{8pt}{8pt}\selectfont
\def\svgwidth{0.6\textwidth}
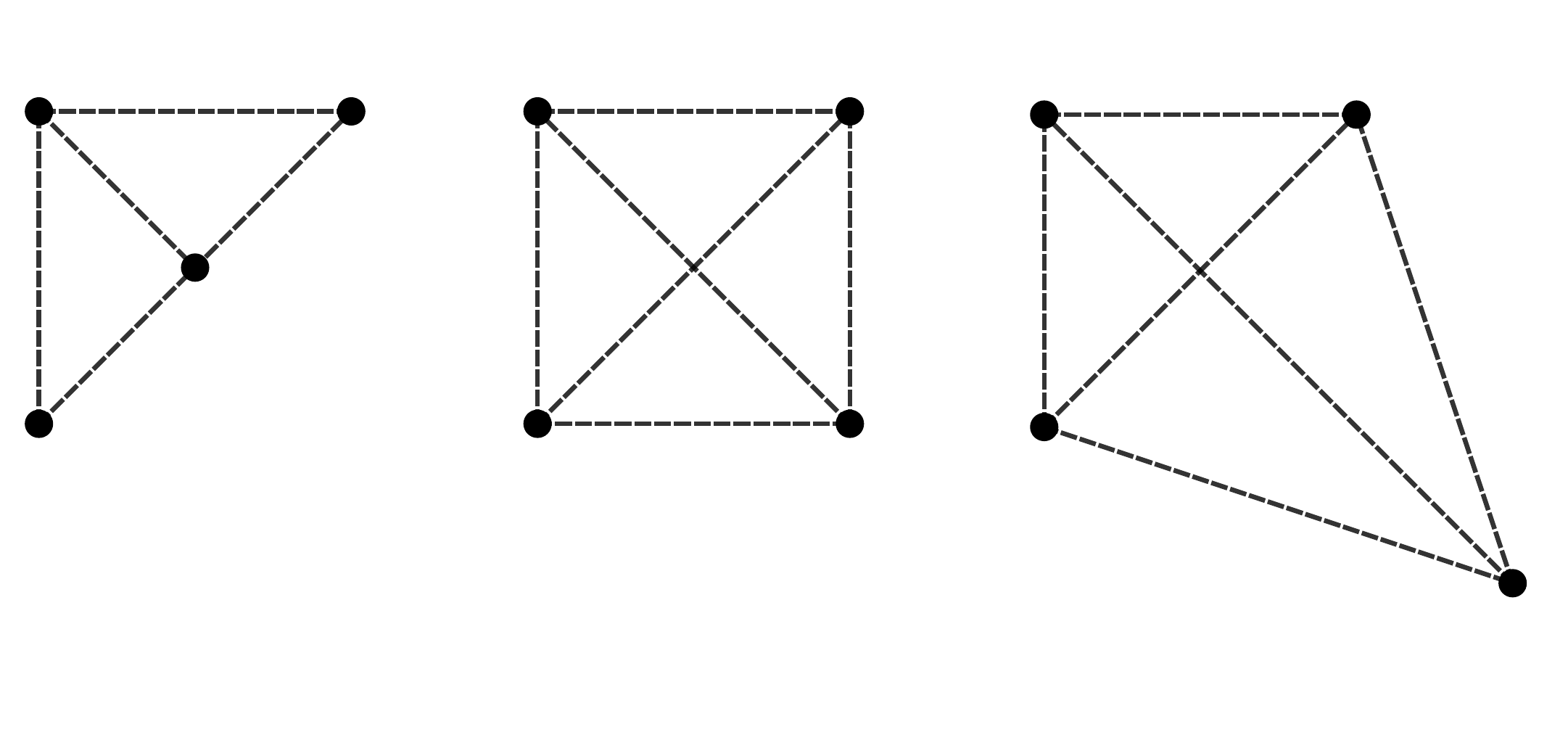
\caption{}
\label{communication_wheel_fig}
\end{figure}

\begin{algorithm}[H]
\If{r.light $=$ NIL }
{
\eIf{visible configuration is not I}
     {do nothing}
     {\eIf{r.position $\neq$ $C$}
          {do nothing}
          {r.light $\leftarrow$ RED\\
           Move to $C'$}
     }

}

\If{r.light $=$ RED}
    {\If{location is $C'$} 
         {r.light $\leftarrow$ BLUE\\
           Move to $C''$}
    }

\If{r.light $=$ BLUE}
    {\If{location is $C''$}
          {Move to $C'$}

     \If{location is $C'$}
          {r.light $\leftarrow$ NIL\\
           Move to $C$}      
    }

 \caption{Algorithm $AlgOC$ for $OC$ in $\mathcal{FSTA}^{A}$ executed by each robot $r$, initially r.light $\leftarrow$ $NIL$}
\end{algorithm}

We claim that the problem can be solved in $\mathcal{FSTA}^{A}$ but cannot be solved in $\mathcal{OBLOT}^{F}$. We prove our claim by providing an algorithm $AlgOC$ to solve $OC$ in $\mathcal{FSTA}^{A}$ and then proving that $OC$ cannot be solved in $\mathcal{OBLOT}^{F}$.

\subsubsection{Description of the Algorithm $AlgOC$}

The internal light used by the robots can have three colors: ${NIL, RED, BLUE}$. Initially all the robots start from Configuration-I and have their internal lights set to $NIL$. A robot $r$ when activated takes a snapshot of its surrounding in the $Look$ phase and tries to understand the current configuration. The robot $r$ then performs the following actions based on the dual information about its position in the current configuration and the color of its internal light:
\begin{enumerate}
    \item If the color of the internal light of $r$ is $NIL$, and the visible configuration is not I, then it does nothing in that phase,i.e., there is no change of internal light or movement on the part of $r$ in that phase. If the configuration is $I$, but the location of $r$ is not at $C$ then also it does nothing in that phase. But if the location of $r$ is at $C$, then it changes the color of its internal light to $RED$ and moves to the point $C'$.
    
    \item If the color of the internal light of $r$ is $RED$, the visible configuration is II and it is at $C'$, then $r$ changes the color of its internal light to $BLUE$, and moves to the point $C''$.
    
    \item If the color of the internal light of $r$ is $BLUE$, then if the location of $r$ is at $C''$ then $r$ moves to the point $C'$, and if the location of $r$ is at $C'$ then $r$ changes the color of its internal light to $NIL$ and moves to the point $C$.
\end{enumerate}

\subsubsection{Correctness of the Algorithm AlgOC}

\begin{lemma}\label{lma1}
$\forall R \in \mathcal{R}_4, OC \in FSTA^{A}(R)$.
\end{lemma}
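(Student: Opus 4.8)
The plan is to prove the lemma by establishing the correctness of the algorithm $AlgOC$ stated above, i.e.\ by showing that under every fair asynchronous schedule the four robots realise the required infinite sequence of configurations I-II-III-II-I-$\cdots$. The whole argument rests on one structural invariant that I would prove first: throughout the entire execution exactly one robot---the one initially located at $C$---ever changes its light or moves, while the other three robots remain fixed at their positions with light $\mathrm{NIL}$ forever. This holds because a robot enters the only light-changing branch of the $\mathrm{NIL}$ case exactly when it observes configuration I \emph{and} finds itself at $C$; since in configuration I the point $C$ is occupied solely by the active robot, a passive robot is never at $C$, and in every non-I configuration a $\mathrm{NIL}$ robot does nothing. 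Consequently the passive robots compute ``do nothing'' on every activation regardless of their snapshot, and the set of non-$\mathrm{NIL}$ (active) robots always has size at most one.

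Granting this invariant, I would trace the finite-state machine of the single active robot. Reading the branches of $AlgOC$, its state $(\text{position},\text{light})$ evolves as $(C,\mathrm{NIL})\to(C',\mathrm{RED})\to(C'',\mathrm{BLUE})\to(C',\mathrm{BLUE})\to(C,\mathrm{NIL})\to\cdots$, and the corresponding global configurations are precisely I $\to$ II $\to$ III $\to$ II $\to$ I $\to\cdots$. The crucial point I would emphasise is that the two passages through $C'$ (hence through configuration II) are disambiguated purely by the internal light: arriving from $C$ the robot is $\mathrm{RED}$ and proceeds to $C''$, whereas arriving from $C''$ it is $\mathrm{BLUE}$ and returns to $C$. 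This is exactly where persistent memory is used, and it foreshadows why $OC$ will later be shown unsolvable in $\mathcal{OBLOT}^{F}$.

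The main obstacle, and the step I would treat most carefully, is arguing that the asynchronous scheduler introduces no incorrect behaviour despite the $Look$/$Move$ separation and the transient intermediate positions traversed during a move. Here I would exploit the invariant directly: since at any time at most one robot is in motion, no second robot can move during the gap between the active robot's own $Look$ and $Move$, so the active robot always records its \emph{true} landmark position and advances its state machine correctly, while rigid mobility guarantees that each move terminates exactly on the next landmark. Any snapshot a passive robot might take---even one taken while the active robot is mid-move and the visible configuration is some transient, non-landmark one---still yields ``do nothing,'' so stale or intermediate observations are harmless. Finally I would invoke fairness: the active robot is activated infinitely often, so the four-step cycle repeats forever, producing the required infinite sequence. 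Since the algorithm is identical for every team of four identical robots started in Configuration I, this establishes $OC\in\mathcal{FSTA}^{A}(R)$ for all $R\in\mathcal{R}_4$.
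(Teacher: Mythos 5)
Your proposal is correct and follows essentially the same route as the paper's own proof: establish that the robot initially at $C$ is the unique mover (every other robot, having light $\mathrm{NIL}$ and never being at $C$ in configuration I, always computes ``do nothing''), then trace the single active robot's state machine $(C,\mathrm{NIL})\to(C',\mathrm{RED})\to(C'',\mathrm{BLUE})\to(C',\mathrm{BLUE})\to(C,\mathrm{NIL})$ with the light disambiguating the two visits to $C'$. Your explicit handling of mid-move snapshots and fairness is somewhat more careful than the paper's brief closing remark, but it is a refinement of, not a departure from, the same argument.
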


\begin{proof}
A robot's internal light can take three possible colors: { $ NIL, RED , BLUE $}. Initially each robot has its initialized to $NIL$. In our algorithm, we ensure that the robot placed at position $C$ in the initial configuration, i.e\ $z$, is the only robot that makes any movement throughout the execution. We denote the position where robot $z$ must move to make the configuration II and III as $C'$ and $C''$ respectively. Initially  a robot wakes up to find its colour $NIL$, and if it is at $C$ it can clearly distinguish its position from other robots, since it is the only robot in the whole configuration which is equidistant from the other three robots . So, at that point the robot changes its colour to $RED$ and move to position $C'$ such that the resulting configuration is II. At this point $z$ is the only robot with colour $RED$. When again $z$ is activated by the adversary it wakes to find its internal light having colour $RED$ and configuration II, so it changes the color of its internal light to $BLUE$ and moves to the position $C''$ such that the resulting configuration becomes III. Next again when $z$ is activated it finds it internal light having the colour $BLUE$, and visible configuration to be III. It then  moves to the position $C'$ forming configuration II. Now when again $z$ is activated it finds the visible configuration to be configuration II and internal light $BLUE$, it then changes the color of its internal light to $NIL$ and moves to the point $C$ forming configuration $I$. When $z$ is activated again, in this position, it finds its internal light having colour $NIL$ and configuration I, then $r$ repeats its moves and the algorithm continues like this.

Note that, after the execution of the algorithm has started, if any robot other than $z$ is activated, they do not do anything, as a robot with light $NIL$ makes a movement only when the visible configuration is I and it is in $C$, and according to our algorithm only robot $z$ satisfy this criteria, throughout the execution of the algorithm.\qed
\end{proof}

\begin{lemma}\label{lma2}
$\exists R \in \mathcal{R}_4$,  $OC$ $\notin$ $OBLOT^{F}(R)$.
\end{lemma}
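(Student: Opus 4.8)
The plan is to argue by contradiction, exploiting the fact that in $\mathcal{OBLOT}$ a robot's move is a function of the current snapshot alone, with no memory of the past, so that under a fully-synchronous deterministic schedule the successor of a configuration is uniquely determined. Since the prescribed sequence $I,II,III,II,I,\dots$ requires Configuration-$II$ to be followed once by $III$ and once by $I$, this will clash with that uniqueness.

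Concretely, suppose some deterministic algorithm $\mathcal{A}$ solves $OC$ for the four-robot team $R$ whose initial placement is Configuration-I. Under \textsc{Fsynch} every robot is activated in every round and completes a whole $LCM$ cycle, so the entire execution is deterministic; moreover, because the robots are oblivious and carry no lights, the destination each robot computes in a round depends only on the snapshot (the set of positions it sees) taken in that round. A correct execution realizes $I,II,III,II,I,II,\dots$; I would fix the two rounds $t$ and $t'$ at whose end Configuration-$II$ is formed for the first and the second time, with the requirement that the configuration produced in round $t+1$ is $III$ while the one produced in round $t'+1$ is $I$.

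First I would establish that the global snapshot at the end of round $t$ coincides with that at the end of round $t'$: both are exactly the four points of Configuration-$II$, and --- using that Configurations $I$, $II$, $III$ are pairwise non-congruent and individually rigid (as depicted in Figure~\ref{communication_wheel_fig}, which we may assume by choosing $R$ appropriately), so that the map assigning robots to the four points recurs each time Configuration-$II$ is formed --- every robot sees the \emph{identical} view at the two occurrences. Then, since $\mathcal{A}$ is deterministic and memoryless, every robot computes the identical destination in round $t+1$ as in round $t'+1$, whence the configuration created immediately after $II$ is the same on both occasions. This contradicts $III \neq I$, so no such $\mathcal{A}$ exists, giving a team $R\in\mathcal{R}_4$ with $OC\notin OBLOT^{F}(R)$.

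The main obstacle is precisely the step asserting that the two passages through Configuration-$II$ are indistinguishable. Because the robots are anonymous and each owns a private coordinate frame, two \emph{different} robots sitting at the same point of $II$ could in principle compute different destinations, so I must rule out that the algorithm secretly records a ``direction bit'' by permuting which robot occupies which point of $II$ on the up-swing versus the down-swing. I would close this gap by relying on the total asymmetry of the three configurations (no nontrivial chirality-preserving isometry maps any of them to itself) and on the structure of the oscillation, forcing the robot-to-point correspondence to repeat: intuitively, the three ``outer'' robots must stay fixed and the single mover must return to the same point $C'$, so that the \emph{full} state, not merely the geometric shape, recurs identically at rounds $t$ and $t'$. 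This is exactly the one bit of information that persistent memory supplies in $AlgOC$ (Lemma~\ref{lma1}) and that $\mathcal{OBLOT}$ lacks.
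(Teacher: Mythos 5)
Your proposal is correct and follows essentially the same route as the paper: the paper's proof is exactly your indistinguishability argument, contrasting the occurrence of Configuration-II reached from I (next must be III) with the occurrence reached from III (next must be I), and noting that memoryless robots see identical views and hence make identical moves --- only the paper states this in a few lines without your additional rigor. One caveat on your final patch: the claim that the robot-to-point correspondence is \emph{forced} to repeat is not actually derivable from the problem specification (a putative algorithm could, in principle, cyclically permute the three outer robots while oscillating), but the gap closes more simply without it: since the robots are anonymous and the adversary may choose all local coordinate frames to be identically aligned, the destination computed at each point of Configuration-II is a fixed function of the observed point set alone, so the successor of II as a point set is unique irrespective of which robot occupies which point --- a subtlety the paper's own proof glosses over entirely.
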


\begin{proof}
Let there exists an algorithm $A$ which solves problem $OC$ in $\mathcal{OBLOT}^{F}$.Now let us consider two different scenarios:\\
\textbf{Case-1}: The robots have just changed their configuration to II from I, and the configuration is currently II.\\
\textbf{Case-2}: The robots have just changed their configuration to II from III, and the configuration is currently II.

In Case-1 the next configuration that the robots must form is III, and in Case-2 the next configuration robots have to form is I. But in both the cases, the view of the robots are same. And as in $\mathcal{OBLOT}^{F}$ setting the robot has to decide its next destination based on its present view as the robots neither have any internal memory nor they have the power of external communication. As a result, the robots cannot distinguish between Case-1 and Case-2, hence they will perform the same move in both the cases. Hence, problem $OC$ cannot be solved in $\mathcal{OBLOT}^{F}$.\qed

\end{proof}

\begin{theorem}\label{thm1}
$\mathcal{OBLOT}^{F}$ $\perp$ $\mathcal{FSTA}^{A}$
\end{theorem}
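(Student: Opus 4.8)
The plan is to verify the two separating conditions in the definition of $\perp$. By definition, establishing $\mathcal{OBLOT}^{F} \perp \mathcal{FSTA}^{A}$ requires exhibiting a team $R_1$ together with a problem in $Task(\mathcal{OBLOT},F;R_1) \setminus Task(\mathcal{FSTA},A;R_1)$, and a team $R_2$ together with a problem in $Task(\mathcal{FSTA},A;R_2) \setminus Task(\mathcal{OBLOT},F;R_2)$. Note that the definition permits $R_1$ and $R_2$ to be distinct, which is convenient since the two witnessing problems live on different teams.

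For the first direction I would invoke the $SRO$ (Shrinking Rotation) problem recalled at the start of this subsection. From \cite{Flocchiniopodis}, there is a team $R_1$ on which $SRO$ is solvable in $\mathcal{OBLOT}^{F}$ but not in $\mathcal{FSTA}^{S}$. Since $\mathcal{FSTA}^{S} \geq \mathcal{FSTA}^{A}$ (the trivial scheduler monotonicity $M^{S}\geq M^{A}$ noted in the preliminaries), we have $\mathcal{FSTA}^{S}(R_1) \supseteq \mathcal{FSTA}^{A}(R_1)$, so a problem unsolvable under the stronger semi-synchronous scheduler is a fortiori unsolvable under the weaker asynchronous one. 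Hence $SRO \in Task(\mathcal{OBLOT},F;R_1) \setminus Task(\mathcal{FSTA},A;R_1)$.

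For the second direction I would take $R_2 \in \mathcal{R}_4$ to be the four-robot team witnessing Lemma \ref{lma2}, and use the problem $OC$. Lemma \ref{lma1} gives $OC \in FSTA^{A}(R_2)$, since it holds for \emph{every} four-robot team, while Lemma \ref{lma2} gives $OC \notin OBLOT^{F}(R_2)$. Thus $OC \in Task(\mathcal{FSTA},A;R_2) \setminus Task(\mathcal{OBLOT},F;R_2)$, and combining the two directions yields orthogonality.

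At the level of the theorem itself there is essentially no obstacle: all the genuine content has already been discharged, namely constructing the color-cycling algorithm $AlgOC$ (Lemma \ref{lma1}) and arguing the indistinguishability of Case-1 and Case-2 for oblivious fully-synchronous robots (Lemma \ref{lma2}). The only point demanding a moment's care is getting the monotonicity direction right in the first half: passing from $\mathcal{FSTA}^{S}$ down to $\mathcal{FSTA}^{A}$ can only shrink the set of solvable problems, so the impossibility of $SRO$ transfers \emph{from} the stronger scheduler \emph{to} the weaker one, rather than the reverse.
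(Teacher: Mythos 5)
Your proposal is correct and follows exactly the paper's own argument: it separates via $OC$ (Lemmas \ref{lma1} and \ref{lma2}) in one direction and via $SRO$ from \cite{Flocchiniopodis} with the scheduler monotonicity $\mathcal{FSTA}^{S} \geq \mathcal{FSTA}^{A}$ in the other. Your added care about the direction in which the $SRO$ impossibility transfers is sound and merely makes explicit what the paper states as ``hence not in $\mathcal{FSTA}^{A}$.''
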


\begin{proof}
Problem $OC$ can be solved in $\mathcal{FSTA}^{A}$, but cannot be solved in $\mathcal{OBLOT}^{F}$, and in \cite{Flocchiniopodis} it has been proved that the problem $SRO$ or Shrinking Rotation can be solved in $\mathcal{OBLOT}^{F}$ but cannot be solved in $\mathcal{FSTA}^{S}$, hence not in $\mathcal{FSTA}^{A}$. Hence our theorem is proved. \qed
\end{proof}

\begin{theorem}
$\mathcal{OBLOT}^{A}$ $<$ $\mathcal{FSTA}^{A}$
\end{theorem}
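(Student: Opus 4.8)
The plan is to verify the two conditions in the definition of $>$: that $\mathcal{FSTA}^{A}$ is computationally not less powerful than $\mathcal{OBLOT}^{A}$, and that there is a strict separation witnessed by some team and some problem. For the first condition I would simply invoke the trivial containment recorded in the preliminaries, namely that for every scheduler $S$ we have $\mathcal{FSTA}^{S} \geq \mathcal{OBLOT}^{S}$; this holds because an $\mathcal{FSTA}$ robot can ignore its internal light by fixing a single color throughout, thereby simulating any $\mathcal{OBLOT}$ robot running the same algorithm. Specializing to $S = ASYNCH$ yields $\mathcal{FSTA}^{A} \geq \mathcal{OBLOT}^{A}$ directly, so no team-by-team argument is needed.

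For the strict separation I would reuse the problem $OC$ together with Lemmas~\ref{lma1} and~\ref{lma2}. Fix the team $R \in \mathcal{R}_4$ supplied by Lemma~\ref{lma2}, for which $OC \notin OBLOT^{F}(R)$. By Lemma~\ref{lma1}, which holds for \emph{every} four-robot team, we have $OC \in FSTA^{A}(R)$. It then remains to upgrade the impossibility from the fully synchronous setting to the asynchronous one. The key observation is the monotonicity $M^{F} \geq M^{S} \geq M^{A}$ stated earlier, which gives the inclusion $Task(\mathcal{OBLOT}, ASYNCH; R) \subseteq Task(\mathcal{OBLOT}, FSYNCH; R)$. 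Hence a problem unsolvable under the stronger $FSYNCH$ scheduler is \emph{a fortiori} unsolvable under the weaker $ASYNCH$ scheduler, so $OC \notin OBLOT^{A}(R)$. Consequently $Task(\mathcal{FSTA}, ASYNCH; R) \setminus Task(\mathcal{OBLOT}, ASYNCH; R) \neq \emptyset$, which combined with the first condition establishes $\mathcal{FSTA}^{A} > \mathcal{OBLOT}^{A}$, i.e.\ $\mathcal{OBLOT}^{A} < \mathcal{FSTA}^{A}$.

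I do not anticipate a genuine obstacle here, since the substantive work is already carried out in Lemmas~\ref{lma1} and~\ref{lma2}. The only point that warrants a careful check is the \emph{direction} of the scheduler inclusion: one must be certain that passing from $\mathcal{OBLOT}^{F}$ to $\mathcal{OBLOT}^{A}$ can only shrink, never enlarge, the set of solvable tasks, so that an $FSYNCH$-impossibility transfers downward to $ASYNCH$ rather than the reverse. Once that inclusion is applied in the correct orientation, the theorem follows immediately.
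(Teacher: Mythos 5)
Your proposal is correct and follows essentially the same route as the paper's own proof: solvability of $OC$ in $\mathcal{FSTA}^{A}$ via Lemma~\ref{lma1}, unsolvability in $\mathcal{OBLOT}^{F}$ via Lemma~\ref{lma2} transferred downward to $\mathcal{OBLOT}^{A}$ through the inclusion $M^{F} \geq M^{S} \geq M^{A}$, plus the trivial containment $\mathcal{OBLOT}^{A} \leq \mathcal{FSTA}^{A}$. You merely spell out the scheduler-direction check and the light-ignoring simulation more explicitly than the paper, which states both as obvious.
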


\begin{proof}
Problem $OC$ cannot be solved in $\mathcal{OBLOT}^{F}$, hence obviously not in $\mathcal{OBLOT}^{A}$. But problem $OC$ as proved in Theorem \ref{lma1} can be solved in $\mathcal{FSTA}^{A}$. Also, trivially $\mathcal{OBLOT}^{A}$ $\leq$ $\mathcal{FSTA}^{A}$.  Hence, the theorem follows.\qed
\end{proof}

\subsection{An Asynchronous Algorithm in $\mathcal{FCOM}$ for Problem $-IL$}\label{4.2}

We consider the problem $-IL$ defined in \cite{Flocchiniopodis}.

\begin{figure}[htb!]
\centering
\fontsize{8pt}{8pt}\selectfont
\def\svgwidth{0.6\textwidth}
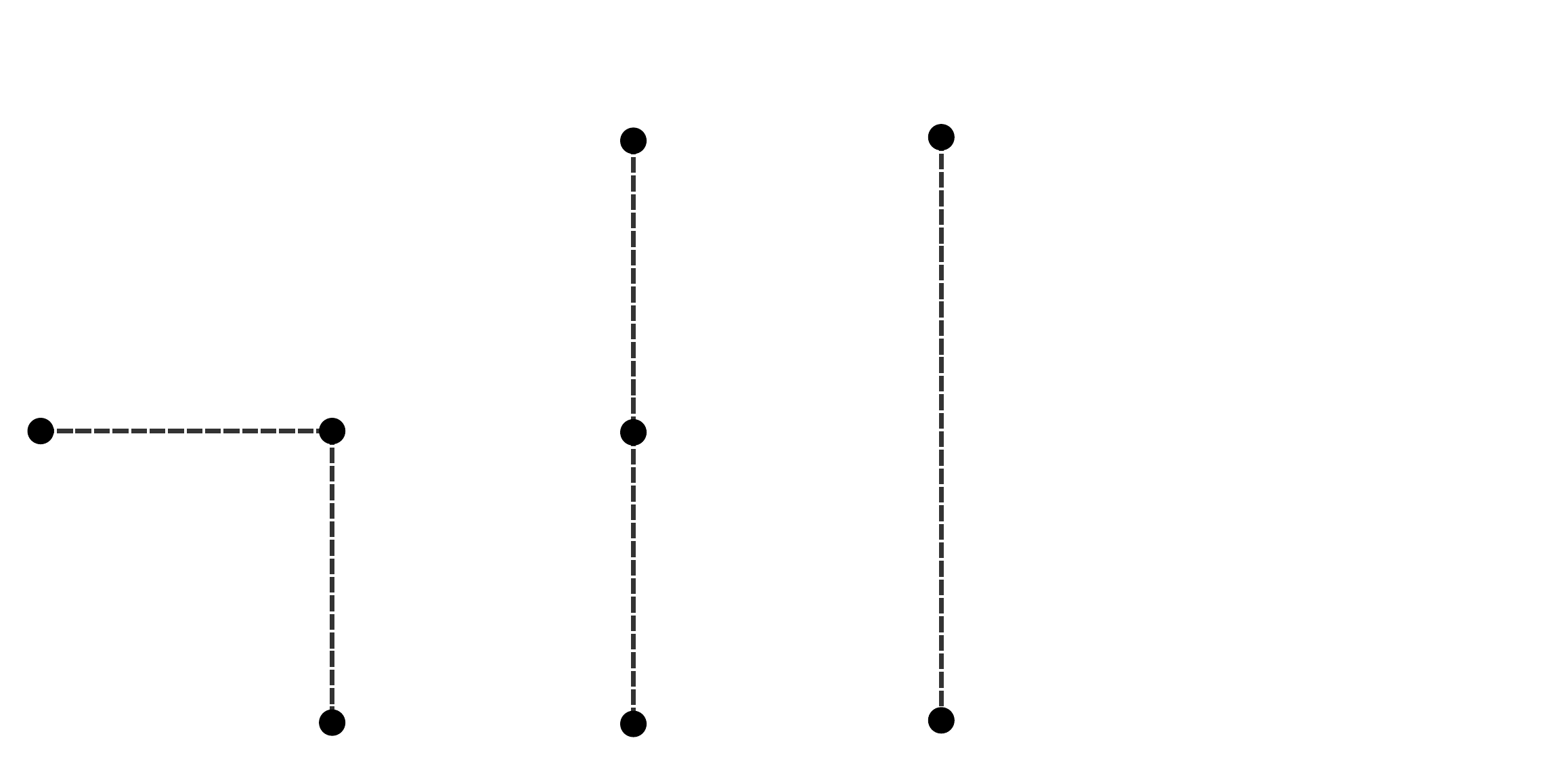
\caption{}
\label{communication_wheel_fig}
\end{figure}

\begin{definition}
Three robots $a$, $b$, and $c$, starting from the initial configuration shown in Figure 2(a), must form first the pattern of Figure 2(b) and then move to form the pattern of Figure 2(c).
\end{definition}

It has been already proved in \cite{Flocchiniopodis} that $-IL$ cannot be solved in $\mathcal{FSTA}^{F}$. In this paper, we provide an algorithm for $-IL$ in $\mathcal{FCOM}^{A}$.
We re-name Figure-2(a),2(b),2(c) in our Figure-2 as Configuration I,II and III respectively. Also we call the point where robot $a$ must move from configuration-I to form configuration-II to be $P_1$, and the point where the robot $b$ must move from configuration-II to form configuration-III to be $P_2$.

\begin{algorithm}[H]

\eIf{there exists a robot with light F}
     {do not do anything}
       {\eIf{there exists a robot with light M}
              {\eIf{Visible Configuration is not II}
                    {do not do anything}
                     {\eIf{position is in the line segment joining the other two robots}
                           {r.light $\leftarrow$ F\\
                            Move to the $P_2$}
                            {do nothing}
                     }
              }
                {\eIf{Visible Configuration is II}
                      {do nothing}
                        {\eIf{I can form II with least clockwise rotation}
                              {r.light $\leftarrow$ M\\
                                 Move to $P_1$}
                                {do nothing}
                        }
                }
       }

 \caption{Algorithm $COMIL$ for $-IL$ in $\mathcal{FCOM}^{A}$ executed by each robot $r$, initially r.light $\leftarrow$ $NIL$}
\end{algorithm}

\subsubsection{Description of Algorithm COMIL}

Here the robots are equipped with an external light whose color can only be seen by other robots. The external light can take one of the following three colors: ${NIL, M, F}$. Initially, all robots start at Configuration-I and have their external light set to $NIL$. A robot $r$ when activated takes a snapshot of its surrounding in the $Look$ phase. The snapshot  provides $r$ with the location of the robots according to its own local co-ordinate system and the current color of the light of the robots except itself. The robot $r$ then performs the following actions based on the dual information about its position in the current configuration and the colors of the external lights of the robots except itself:

\begin{enumerate}
    \item If $r$ observes any robot or robots in the snapshot obtained having external light $F$ it does nothing, i.e, it neither changes the color of its external light nor it executes any movement, irrespective of the present configuration and any other light attained in the snapshot.
    \item If $r$ does not observe any robot having external light $F$, but observes a robot having external light $M$, then there can be two cases: either the visible configuration is Configuration-II, or the visible configuration is not Configuration-II. If the configuration is not Configuration-II $r$ does not do anything. Further if the current configuration is Configuration-II, there can be two cases. If $r$ does not lie in the line segment joining the other two robots it does nothing. Otherwise if $r$ lies in the line segment joining the location of the other two robots, then $r$ changes the color of the external light to $F$ and moves to the point $P_2$.
    \item If $r$ does not observe any visible light, then there can be two cases, either the observed configuration is Configuration I or the observed configuration is Configuration II, i.e. $r$ is $a$. If the observed configuration is Configuration I, then if $r$ is $a$ ( the robot which can form Configuration II with least clockwise angular movement), then $r$ changes the color of the external light to $M$ and moves to the point $P_1$. Otherwise if the observed configuration is Configuration II, $r$ does nothing.
\end{enumerate}

\subsubsection{Correctness of the Algorithm COMIL}

\begin{lemma}\label{lma3}
$\forall R$ $\in$ $\mathcal{R}_3$, $-IL$ $\in$ $FCOM^{A}(R)$.
\end{lemma}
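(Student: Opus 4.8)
The goal is to prove Lemma~\ref{lma3}: for every team $R \in \mathcal{R}_3$ of three robots, the problem $-IL$ is solvable in $\mathcal{FCOM}^{A}$. The plan is to argue that the algorithm $COMIL$ correctly drives the robots through the required sequence Configuration-I $\to$ Configuration-II $\to$ Configuration-III under any fair asynchronous scheduler, and that it never gets stuck or produces a spurious configuration. The core of the argument is an invariant-based analysis tracking, for each reachable global state, which robots are allowed to move and what light they carry.

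First I would establish the two phase transitions in order. For the transition I $\to$ II: initially all lights are $NIL$, so no robot sees light $F$ or $M$, and by the chirality assumption exactly one robot, namely $a$, is the one that can form Configuration-II with the least clockwise angular movement. Hence only $a$ is eligible to act; when activated it sets its light to $M$ and moves to $P_1$, producing Configuration-II. I would note that because $a$ alone satisfies the guard, the other robots $b,c$ do nothing while the configuration is still I, so no race or interference occurs. For the transition II $\to$ III: once Configuration-II is reached and $a$ carries light $M$, the robots $b,c$ see an $M$-light and no $F$-light; the guard then singles out the robot lying on the segment joining the other two (which is $b$), who sets its light to $F$ and moves to $P_2$, forming Configuration-III. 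Again only one robot is eligible, so the move is deterministic in outcome.

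The subtle part, and where I expect the main obstacle, is handling \emph{asynchrony} rather than synchrony: a robot may take its snapshot (the $Look$) at one moment and execute its $Move$ much later, during which other robots may have acted. I would therefore argue that the $F$-light acts as a global ``freeze'' signal: the first clause of the algorithm guarantees that once any robot displays $F$, every other robot that subsequently looks will do nothing. Since $b$ sets its light to $F$ exactly when committing to the final move, this ensures that after the II $\to$ III transition begins, no further movement is ever triggered, so Configuration-III is terminal and stable. The delicate case to rule out is a robot whose $Look$ occurred in Configuration-II but whose $Move$ is pending when $b$ changes to $F$; I would check that the only robot whose guard was satisfied in Configuration-II is $b$ itself, so no stale pending move by $c$ can corrupt the target pattern. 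I would also verify that a robot performing a null move (``do nothing'') neither changes its light nor its position, so a delayed activation cannot accidentally re-trigger an earlier phase.

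Finally I would invoke fairness to conclude progress: since every robot is activated infinitely often, the unique eligible robot in each phase is eventually activated and completes its $LCM$ cycle, so the sequence I $\to$ II $\to$ III is actually realized and not merely possible. Combining the correctness of each transition, the non-interference of ineligible robots, and the terminal stability enforced by the $F$-light, I conclude that $COMIL$ solves $-IL$ under $\mathcal{FCOM}^{A}$ for every $R \in \mathcal{R}_3$, which is exactly the statement of the lemma.
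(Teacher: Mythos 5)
Your overall strategy—case analysis along the intended execution, with the lights acting as phase markers and a freeze signal, plus fairness for progress—is essentially the paper's argument, and the I $\to$ II and II $\to$ III transitions are handled correctly. But your stability analysis has a genuine gap: it forgets that in $\mathcal{FCOM}$ a robot \emph{cannot see its own light}, at the two points where this matters most. First, you claim that once any robot displays $F$, ``every other robot that subsequently looks will do nothing,'' and conclude from this alone that ``no further movement is ever triggered.'' The $F$-freeze covers $a$ and $c$, but not $b$ itself: when $b$ is re-activated after reaching $P_2$, it does not see its own $F$-light, so the first clause of $COMIL$ never fires for $b$. What actually keeps $b$ still is a different clause: $b$ sees $a$'s light $M$ and observes that the visible configuration is not Configuration-II, hence does nothing. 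The paper's proof checks exactly this case explicitly; your argument, as stated, leaves $b$'s post-III activations unjustified, and since $b$ is the one robot the $F$-signal cannot silence, this is precisely the case that needs the argument.

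The same oversight appears, more mildly, in your II $\to$ III setup and in the transit phases. When $a$ is re-activated at $P_1$ in Configuration-II, it sees \emph{no} lights at all ($b$ and $c$ are $NIL$, and its own $M$ is invisible to it), so the reasoning ``the $M$-light singles out $b$'' does not explain why $a$ stays put; you need the explicit clause that a robot seeing no lights and Configuration-II does nothing. Similarly, your justification that $b,c$ are inert ``while the configuration is still I'' does not cover snapshots taken while $a$ is in transit, when the configuration is neither I nor II; there the relevant guard is ``$M$ visible but configuration not II $\Rightarrow$ do nothing'' (note $a$'s light is already $M$ during its move, since lights are set in the Compute phase). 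All of these cases are handled by the algorithm, so nothing in $COMIL$ fails—but your proof must invoke the correct clauses rather than the $F$-freeze and the Configuration-I guard, and until it does, the claim that Configuration-III is terminal is not established.
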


\begin{proof}
Initially all robots have colour NIL and are arranged in Configuration-I. Here the robot $a$ can uniquely identify its position, as it is the only robot whose ninety degree clockwise rotation around the robot $b$ results in Configuration-II. According to our algorithm, at this position only robot $a$ is allowed to move, $a$ changes its colour to $M$ and moves to position $P_1$ forming Configuration-II. While $a$ is moving towards the point $P_1$, if $b$ or $c$ is activated, their snapshot in the $Look$ returns a configuration where there is one robot with light set to $M$, and the configuration is not yet Configuration II. Hence at this point neither $b$, nor $c$ makes any movement. When $a$ is at the point $P_1$, if $a$ or $c$ is activated neither of them changes their color or make any movement. If $a$ is activated $a$ observes the color of the other two robots to be set to $NIL$ and the visible configuration is Configuration II. Hence $a$ does not do anything. If $c$ is activated it observes one robot ($a$) having external light set to $M$ and the visible configuration to be Configuration II, but $c$ does not lie in the line segment joining the other two robots, i.e., $a$ and $b$. Hence in this case $c$ also does not do anything. If $b$ is activated, it observes one robot with its external light set to $M$, visible configuration to be Configuration II, and also it lies on the line segment joining the other two robots. As the external light of $a$ is still set to $M$, $b$ clearly identifies the position $P_2$ where it must move to form Configuration III. Now it turns its light to $F$ and moves to $P_2$ which results in Configuration III. Now while $b$ is moving towards the $P_2$, if either $a$ or $c$ is activated they observe a robot with its external light set to $F$ and hence does not do anything. Once $b$ reaches $P_2$ the resulting configuration is Configuration III. If again at this point if $a$ or $c$ is activated they observe a robot with its external light set to $F$, and hence again, neither of them do anything. If $b$ is activated, $b$ observes a robot having its external light set to $M$ and the configuration is not Configuration II. Hence $b$ does not make any movement. As a result the robots do not make any further movement and remain at Configuration III permanently according to our algorithm and the problem is solved. It must be noted that as we have assumed rigid movement of the robots, so the robots $a$ and $b$ do not stop in between their movement. Hence our algorithm solves problem $-IL$.\qed
\end{proof}

\begin{theorem}\label{thm3}
$\mathcal{FCOM}^{A}$ $\perp$ $\mathcal{FSTA}^{F}$
\end{theorem}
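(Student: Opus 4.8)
The plan is to establish orthogonality by exhibiting one problem solvable in $\mathcal{FCOM}^{A}$ but not in $\mathcal{FSTA}^{F}$, and another problem solvable in $\mathcal{FSTA}^{F}$ but not in $\mathcal{FCOM}^{A}$. For the first direction, the work is essentially already done: Lemma~\ref{lma3} shows that the problem $-IL$ is solvable in $\mathcal{FCOM}^{A}$, and the excerpt recalls that it has been proved in \cite{Flocchiniopodis} that $-IL$ cannot be solved in $\mathcal{FSTA}^{F}$. This immediately yields a problem in $FCOM^{A}(R) \setminus FSTA^{F}(R)$ for the relevant team $R \in \mathcal{R}_3$, so half of the incomparability follows directly.

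For the reverse direction, I would exhibit a problem solvable in $\mathcal{FSTA}^{F}$ but not in $\mathcal{FCOM}^{A}$. The natural candidate is a problem that exploits persistent memory in a way that communication under an asynchronous adversary cannot simulate. The cleanest route is to invoke a problem whose $\mathcal{FSTA}$-solvability is already known and whose $\mathcal{FCOM}^{A}$-unsolvability can be argued. Since $\mathcal{FCOM}$ robots are oblivious, a problem requiring a robot to act on its own past state (rather than on what it observes) is a good separating candidate; under the asynchronous scheduler the adversary can further exploit the fact that a moving robot's light cannot be read by the mover itself. I would look for a problem $P$ with $P \in FSTA^{F}(R)$ for some $R$, together with an impossibility argument for $\mathcal{FCOM}^{A}$ of the same flavour as Lemma~\ref{lma2}: construct two global scenarios that a robot must distinguish to act correctly, and show that in $\mathcal{FCOM}^{A}$ the deciding robot sees identical snapshots (same positions, same colours of the \emph{other} robots) in both, so it is forced into the same move and fails.

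The hard part will be the reverse direction, specifically pinning down a separating problem and proving its $\mathcal{FCOM}^{A}$-impossibility rigorously. The difficulty is that $\mathcal{FCOM}$ robots are not powerless — they can read all other robots' lights — so a naive ``oblivious'' argument (as in Lemma~\ref{lma2} for $\mathcal{OBLOT}$) does not transfer: the information a robot lacks about its own light can sometimes be recovered from the visible configuration or inferred from other robots' lights. The impossibility proof must therefore carefully set up an indistinguishability that survives the presence of communication, typically by arranging that the robot which must take different actions in two scenarios has in both scenarios the same view of every \emph{other} robot (identical positions and identical external colours), while the distinguishing information resides solely in that robot's own (invisible-to-itself) light. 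The asynchronous scheduler helps here: the adversary can activate a single robot in isolation and stall the others mid-cycle, so the deciding robot never gains corroborating evidence from the dynamics of the others. I expect the author to introduce a dedicated problem for this purpose (analogous to the $OC$ construction in Subsection~\ref{4.1}), and the crux of the argument will be verifying that the two constructed executions present byte-for-byte identical inputs to the deciding robot in the $\mathcal{FCOM}^{A}$ model.
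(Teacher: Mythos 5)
Your first direction is exactly the paper's: Lemma~\ref{lma3} gives $-IL \in \mathcal{FCOM}^{A}(R)$, and \cite{Flocchiniopodis} gives $-IL \notin \mathcal{FSTA}^{F}(R)$. That half is complete and correct.

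The reverse direction, however, is where your proposal has a genuine gap: you never actually produce the separating problem or its $\mathcal{FCOM}^{A}$-impossibility proof --- you only describe what such a construction would have to look like, and you correctly flag that a direct indistinguishability argument against $\mathcal{FCOM}$ is delicate (the missing self-light can sometimes be reconstructed from the others' lights and positions). The paper sidesteps this entire difficulty and needs no new problem at all. It invokes the problem $CGE$ (Center of Gravity Expansion) from \cite{Flocchiniopodis}, which is solvable in $\mathcal{FSTA}^{F}$ but \emph{unsolvable even in} $\mathcal{LUMI}^{S}$, and then chains known model relations: $\mathcal{LUMI}^{S} \equiv \mathcal{LUMI}^{A} > \mathcal{FCOM}^{A}$ (items (D11)--(D12) of Section~\ref{3.2}), so $CGE \notin \mathcal{FCOM}^{A}(R)$. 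The key idea you missed is that one does not need an impossibility proof \emph{tailored to} $\mathcal{FCOM}^{A}$: an impossibility against the strictly stronger full-light semi-synchronous model transfers downward for free, because $\mathcal{FCOM}^{A}$ is dominated by $\mathcal{LUMI}^{A} \equiv \mathcal{LUMI}^{S}$. Without this (or an explicit construction replacing it), your argument establishes only $\mathcal{FCOM}^{A} \not\leq \mathcal{FSTA}^{F}$, not the full orthogonality claimed by the theorem.
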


\begin{proof}
It has been proved earlier in \cite{Flocchiniopodis} that Problem $-IL$ cannot be solved in $\mathcal{FSTA}^{F}$, and we have proved in Lemma \ref{lma3} that the Algorithm $COMIL$ solves $-IL$ in $\mathcal{FCOM}^{A}$. Also in \cite{Flocchiniopodis}, it has been proved that there exists a problem, i.e., Problem $CGE$ or Center of Gravity Expansion which can be solved in $\mathcal{FSTA}^{F}$ but cannot be solved in $\mathcal{LUMI}^{S}$. 

Now, $\mathcal{LUMI}^{S}$  $\equiv$ $\mathcal{LUMI}^{A}$ $>$ $\mathcal{FCOM}^{A}$ (Section 3), and hence Problem $CGE$ cannot be solved in $\mathcal{FCOM}^{A}$ as well. Hence, the theorem follows.\qed
\end{proof}

\begin{theorem}

$\mathcal{FCOM}^{A}$ $\perp$ $\mathcal{FSTA}^{S}$

\end{theorem}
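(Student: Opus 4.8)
The statement $\mathcal{FCOM}^{A} \perp \mathcal{FSTA}^{S}$ requires two separating problems: one solvable in $\mathcal{FCOM}^{A}$ but not in $\mathcal{FSTA}^{S}$, and one solvable in $\mathcal{FSTA}^{S}$ but not in $\mathcal{FCOM}^{A}$. The natural strategy is to reuse exactly the two problems already deployed in Theorem~\ref{thm3}, since $\mathcal{FSTA}^{S}$ sits between $\mathcal{FSTA}^{A}$ and $\mathcal{FSTA}^{F}$, and both separating witnesses there are robust to this change of scheduler.

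For the first direction, I would invoke Problem $-IL$. By Lemma~\ref{lma3}, Algorithm $COMIL$ solves $-IL$ in $\mathcal{FCOM}^{A}$. It was shown in \cite{Flocchiniopodis} that $-IL$ cannot be solved in $\mathcal{FSTA}^{F}$; since $\mathcal{FSTA}^{F} \geq \mathcal{FSTA}^{S}$, the problem $-IL$ is also unsolvable in $\mathcal{FSTA}^{S}$. This gives a problem in $FCOM^{A}(R) \setminus FSTA^{S}(R)$ for some $R \in \mathcal{R}_3$.

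For the reverse direction, I would use Problem $CGE$ (Center of Gravity Expansion). As noted in the proof of Theorem~\ref{thm3}, \cite{Flocchiniopodis} proves that $CGE$ is solvable in $\mathcal{FSTA}^{F}$ but not in $\mathcal{LUMI}^{S}$. The key chain is $\mathcal{LUMI}^{S} \equiv \mathcal{LUMI}^{A} > \mathcal{FCOM}^{A}$ (established in Section~\ref{3.2}, specifically (D11)), so $CGE$ cannot be solved in $\mathcal{FCOM}^{A}$ either. The one subtlety is that $CGE$ must be shown solvable in $\mathcal{FSTA}^{S}$, not merely in $\mathcal{FSTA}^{F}$; the cleanest route is to check that the $\mathcal{FSTA}^{F}$ algorithm of \cite{Flocchiniopodis} for $CGE$ in fact tolerates the semi-synchronous scheduler, or alternatively to recall whether $CGE \in FSTA^{S}(R)$ is already recorded there. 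This verification is the only genuine obstacle, since everything else is a direct citation plus the containment $\mathcal{FSTA}^{S} \leq \mathcal{FSTA}^{F}$ and the relation (D11).

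**The main obstacle is the $\mathcal{FSTA}^{S}$-solvability of $CGE$.** If $CGE$ happens to rely essentially on full synchrony, I would instead search for an alternative problem in $FSTA^{S}(R) \setminus FCOM^{A}(R)$, or weaken one side; but assuming the $CGE$ separation from $\mathcal{LUMI}^{S}$ in \cite{Flocchiniopodis} is witnessed by a semi-synchronous $\mathcal{FSTA}$ algorithm, both directions close immediately and the theorem follows.
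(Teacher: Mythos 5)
Your first direction is exactly the paper's: $-IL$ is solvable in $\mathcal{FCOM}^{A}$ by Lemma~\ref{lma3}, unsolvable in $\mathcal{FSTA}^{F}$ by \cite{Flocchiniopodis}, and $\mathcal{FSTA}^{S} \leq \mathcal{FSTA}^{F}$ transfers the impossibility. The second direction, however, contains a genuine gap, and it is precisely the ``one subtlety'' you flagged: $CGE$ is \emph{not} solvable in $\mathcal{FSTA}^{S}$. The very result you cite, namely that $CGE$ is unsolvable in $\mathcal{LUMI}^{S}$, rules it out, because $\mathcal{FSTA}^{S} \leq \mathcal{LUMI}^{S}$, so unsolvability in $\mathcal{LUMI}^{S}$ immediately implies unsolvability in $\mathcal{FSTA}^{S}$. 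In other words, $CGE$ is a problem whose $\mathcal{FSTA}$ solution essentially requires full synchrony: it separates $\mathcal{FSTA}^{F}$ from the semi-synchronous world, which is why it worked in Theorem~\ref{thm3} (where the opponent was $\mathcal{FSTA}^{F}$) but cannot work here (where the opponent is $\mathcal{FSTA}^{S}$). Your contingency plan --- ``search for an alternative problem in $FSTA^{S}(R) \setminus FCOM^{A}(R)$'' --- was the right instinct, but you leave it unexecuted, so the proposal as written does not close the second direction.

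The paper closes it with a different witness: the problem $TAR(d)$ (Triangle Rotation), shown in \cite{Flocchiniopodis} to be solvable in $\mathcal{FSTA}^{S}$ but not in $\mathcal{FCOM}^{S}$. Since $\mathcal{FCOM}^{A} \leq \mathcal{FCOM}^{S}$, the problem $TAR(d)$ is also unsolvable in $\mathcal{FCOM}^{A}$, which yields the orthogonality. Substituting $TAR(d)$ for $CGE$ in your argument repairs the proof; everything else in your proposal matches the paper's route.
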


\begin{proof}
Problem $-IL$ can be solved in $\mathcal{FCOM}^{A}$, but cannot be solved in $FSTA^{F}$ and as we know that, $\mathcal{FSTA}^{S}$ $\leq$ $\mathcal{FSTA}^{F}$, hence $-IL$ cannot be solved in $\mathcal{FSTA}^{S}$. Also in \cite{Flocchiniopodis} it has been shown that the Problem $TAR(d)$ or Triangle Rotation can be solved in $\mathcal{FSTA}^{S}$, but not in $\mathcal{FCOM}^{S}$. Now as $\mathcal{FCOM}^{A}$ $\leq$ $\mathcal{FCOM}^{S}$, so Problem $TAR(d)$ cannot be solved in $\mathcal{FCOM}^{A}$. Hence the theorem follows.\qed
\end{proof}

\begin{theorem}\label{thm5}

$\mathcal{FCOM}^{A}$ $\perp$ $\mathcal{OBLOT}^{F}$

\end{theorem}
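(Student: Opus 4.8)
The plan is to establish orthogonality by exhibiting one separating problem for each of the two required directions, exactly as in the proof of Theorem~\ref{thm1}. For the direction $Task(\mathcal{FCOM},ASYNCH;R)\setminus Task(\mathcal{OBLOT},FSYNCH;R)\neq\emptyset$, I would reuse the problem $-IL$ together with the algorithm $COMIL$ just analysed. Lemma~\ref{lma3} already gives $-IL\in\mathcal{FCOM}^{A}(R)$ for every team $R\in\mathcal{R}_3$, so it only remains to argue that $-IL$ is \emph{not} solvable in $\mathcal{OBLOT}^{F}$. Here I would lean on the known separation $-IL\notin\mathcal{FSTA}^{F}$ from \cite{Flocchiniopodis}: since $\mathcal{FSTA}^{F}\geq\mathcal{OBLOT}^{F}$ (Table~\ref{tab:1}), we have $\mathcal{OBLOT}^{F}(R)\subseteq\mathcal{FSTA}^{F}(R)$, so a problem that escapes the stronger model $\mathcal{FSTA}^{F}$ must also escape the weaker model $\mathcal{OBLOT}^{F}$. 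Hence $-IL\in\mathcal{FCOM}^{A}(R)\setminus\mathcal{OBLOT}^{F}(R)$.

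For the reverse direction, $Task(\mathcal{OBLOT},FSYNCH;R)\setminus Task(\mathcal{FCOM},ASYNCH;R)\neq\emptyset$, I would reduce to the already established incomparability $\mathcal{LUMI}^{A}\perp\mathcal{OBLOT}^{F}$ (equivalently $\mathcal{LUMI}^{S}\perp\mathcal{OBLOT}^{F}$ from Table~\ref{tab:3}, together with $\mathcal{LUMI}^{S}\equiv\mathcal{LUMI}^{A}$). That incomparability supplies a team $R'$ and a problem $\mathcal{P}$ with $\mathcal{P}\in\mathcal{OBLOT}^{F}(R')$ but $\mathcal{P}\notin\mathcal{LUMI}^{A}(R')$. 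Because $\mathcal{FCOM}^{A}\leq\mathcal{LUMI}^{A}$, i.e.\ $\mathcal{FCOM}^{A}(R')\subseteq\mathcal{LUMI}^{A}(R')$, the same problem $\mathcal{P}$ is not solvable in $\mathcal{FCOM}^{A}$ either, giving $\mathcal{P}\in\mathcal{OBLOT}^{F}(R')\setminus\mathcal{FCOM}^{A}(R')$. Combining the two directions yields $\mathcal{FCOM}^{A}\perp\mathcal{OBLOT}^{F}$, which indeed refines $\mathcal{LUMI}^{A}\perp\mathcal{OBLOT}^{F}$, since $\mathcal{FCOM}^{A}$ is the weaker of the two finite-light models.

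The genuine difficulty of this subsection is not the theorem itself but the content already invested in Lemma~\ref{lma3}: given that lemma, the statement is a short chaining of monotonicity inclusions among the $Task$ sets. I expect the only place demanding care is keeping the containment directions straight --- one must use $\mathcal{FSTA}^{F}\geq\mathcal{OBLOT}^{F}$ (and not the reverse) to transfer the non-solvability of $-IL$ to the weaker model, and $\mathcal{FCOM}^{A}\leq\mathcal{LUMI}^{A}$ to transfer the non-solvability of $\mathcal{P}$, and reversing either inclusion would invalidate the argument. The substantive, non-trivial work sits upstream in the design of the freeze-on-$F$ mechanism of $COMIL$, which guarantees that under the asynchronous scheduler a robot activated during the movement of $a$ or $b$ never triggers a spurious transition; it is precisely that correctness that makes $-IL$ available as a clean witness for the first direction here.
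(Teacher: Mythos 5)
Your proof is correct, and its first direction is exactly the paper's: $-IL\in\mathcal{FCOM}^{A}(R)$ by Lemma~\ref{lma3}, and $-IL\notin\mathcal{OBLOT}^{F}(R)$ by transferring the known impossibility $-IL\notin\mathcal{FSTA}^{F}$ down through $\mathcal{FSTA}^{F}\geq\mathcal{OBLOT}^{F}$. Where you diverge is the second direction. The paper uses a \emph{concrete} witness: the problem $SRO$ (Shrinking Rotation) from \cite{Flocchiniopodis}, which is solvable in $\mathcal{OBLOT}^{F}$ but not in $\mathcal{FCOM}^{S}$, and hence not in $\mathcal{FCOM}^{A}$ since $\mathcal{FCOM}^{A}\leq\mathcal{FCOM}^{S}$. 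You instead invoke the abstract incomparability $\mathcal{LUMI}^{A}\perp\mathcal{OBLOT}^{F}$ (via Table~\ref{tab:3} and $\mathcal{LUMI}^{S}\equiv\mathcal{LUMI}^{A}$), extract an unnamed problem $\mathcal{P}\in\mathcal{OBLOT}^{F}(R')\setminus\mathcal{LUMI}^{A}(R')$, and push non-solvability down through $\mathcal{FCOM}^{A}\leq\mathcal{LUMI}^{A}$; you handle the team quantification in the orthogonality definition correctly, so the argument is sound. The trade-off: the paper's route is shorter in its dependency chain, names an explicit separating problem, and in fact establishes the slightly stronger fact that the $\mathcal{OBLOT}^{F}$-side witness already defeats $\mathcal{FCOM}$ under \textsc{Ssynch}; your route needs the extra ingredients $\mathcal{LUMI}^{S}\equiv\mathcal{LUMI}^{A}$ and the headline result $\mathcal{LUMI}^{A}\perp\mathcal{OBLOT}^{F}$ itself, which is rhetorically awkward here --- the theorem is advertised precisely as a refinement of that result, and while there is no logical circularity (refining a statement while citing it in the proof is legitimate), the paper's proof via $SRO$ keeps the refinement independent of the result being refined. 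Note also that your bound on $\mathcal{FCOM}^{A}$ is available in the paper as (D11), so every step you use is licensed by the text.
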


\begin{proof}
We know that $\mathcal{OBLOT}^{F}$ $<$ $\mathcal{FSTA}^{F}$ \cite{Flocchiniopodis} . Now it has been proved in \cite{Flocchiniopodis} that the problem $-IL$ cannot be solved in $\mathcal{FSTA}^{F}$. Hence the problem $-IL$ cannot be solved in $\mathcal{OBLOT}^{F}$ as well. Hence $-IL$ can be solved in $\mathcal{FCOM}^{A}$ but not in $\mathcal{OBLOT}^{F}$.

Again in \cite{Flocchiniopodis}, the authors gave a problem $SRO$ or Shrinking Rotation which is solvable in $\mathcal{OBLOT}^{F}$, but not solvable in $\mathcal{FCOM}^{S}$ hence in $\mathcal{FCOM}^{A}$. Hence our theorem is proved.\qed

\end{proof}

 \subsection{Independent Oscillating Problem}
 
 It has already been proved that there exists a problem which is not solvable in $\mathcal{FSTA}^{F}$, hence in $\mathcal{FSTA}^{A}$ but solvable in $\mathcal{FCOM}^{S}$, i.e., $-IL$ problem. In this section we introduce the $Independent$ $Oscillating$ $Problem$ defined in Definition \ref{IOP} which is solvable in $\mathcal{FSTA}^{A}$ but not in $\mathcal{FCOM}^{S}$. 
 
 \begin{figure}[htb!]
\centering
\includegraphics[width=10cm, height=0.5cm]{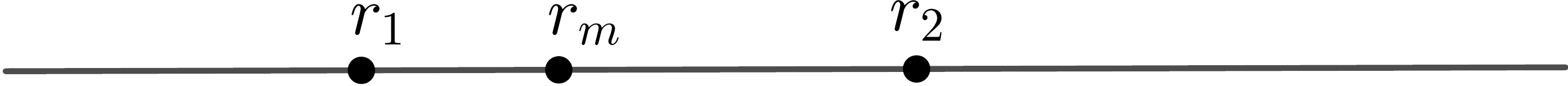}
\caption{Initial configuration of $Independent$ $Oscillating$ $Problem$}
\label{fig:OCP0}
\end{figure}
 
 \begin{definition}\label{IOP}
    \textbf{Independent Oscillating Problem (IOP):} Let three robots be placed initially on a straight line $L$ at different positions on the plane arbitrarily. Let $r_m$ be the robot that lie in between the remaining two terminal robots $r_1$ and $r_2$ (Figure \ref{fig:OCP0}). For i=1,2, let $x_i$ be the initial distance between $r_i$ and $r_m$. The problem requires that for each i=1,2, whenever $r_i$ is activated, then if it is at distance $x_i$ from $r_m$, it moves away $x_i$ distance from $r_m$ along $L$ and, if it is at distance $2x_i$ from $r_m$, it moves closer $x_i$ distance towards $r_m$ along $L$.\\
    More precisely, for i=1,2, let $d_i(t)$ denote the distance between robots $r_m$ and $r_i$ at time $t$. The Independent Oscillating Problem requires each robot $r_i$, starting from an arbitrary distance $d_i(t_0)>0$ at time $t_0$, to move so that there exists a monotonically increasing infinite sequence of time instances $t^i_0,t^i_1,t^i_2,\dots$ such that
    $d_i(t^i_{2k})=d_i(t^i_0)$ and $d_i(t^i_{2k-1})=2d_i(t^i_0)$ for all $k=1,2,3,\dots$, and $\forall h', h'' \in \left[ t^i_{2k}, t^i_{2k+1}   \right]$ and $h' < h'' $, $d_i(h') \leq d_i(h'')$ and $\forall h', h'' \in \left[ t^i_{2k - 1}, t^i_{2k}   \right]$ and $h' < h'' $, $d_i(h') \geq d_i(h'')$.
 \end{definition}
 
\begin{algorithm}[H]
$M$ = position of the robot $r_m$\;
$C$ = position of the robot $r$\;
\eIf{$r=r_m$}
     {do not do anything}
     {
        \If{r.light=NIL}
        {
            Set the r.light to $RED$\;
            Move away $CM$ distance along $L$ from the closest robot\; 
        }
        \If{r.light=RED}
        {
            Set the r.light to $NIL$\;
             Move closer $\frac{CM}2$ distance along $L$ to the closest robot\;
        }
     }   
 \caption{Algorithm $AlgoIOP$ for $IOP$ in $\mathcal{FSTA}^{A}$ executed by each robot $r$, initially r.light $\leftarrow$ $NIL$; }
\end{algorithm} 
 
\paragraph{ Description and Correctness  of the Algorithm $AlgoIOP$} 
\begin{figure}[htb!]
\begin{subfigure}{1.0\textwidth}
\centering
\includegraphics[width=0.8\linewidth]{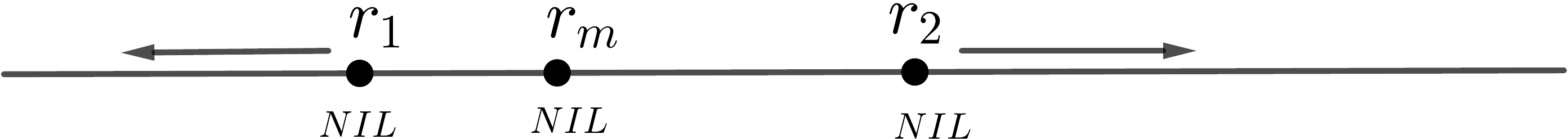}
\subcaption{When the internal light of the terminal robots are $NIL$}
\label{fig:OCP1}
\end{subfigure}
\begin{subfigure}{1.0\textwidth}
\centering
\includegraphics[width=.8\linewidth]{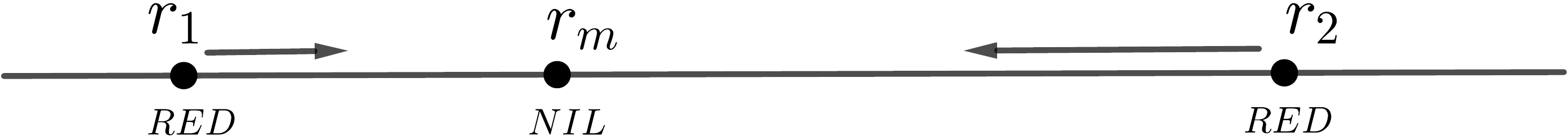}
\subcaption{When the internal light of the terminal robots are $RED$ }
\label{fig:OCP2}
\end{subfigure}
\caption{Illustration of Algorithm $AlgoIOP$}
\label{}
\end{figure}
The internal light used by the robots can take two colors: $NIL$, $RED$. Initially all robots have their internal lights set to $NIL$. A robot $r$ when activated takes a snapshot of its surrounding in the $LOOK$ phase and recognises itself either as the middle robot $r_m$ or as a terminal robot $r_i$. If it recognises itself as $r_m$ then it does nothing. Moreover, at any time instance the this robot stays in the middle according to the algorithm. So whenever $r_m$ gets activated, it neither changes its position nor its internal color. When a terminal robot gets activated for the first time then it recognises itself as a terminal robot from the snapshot taken in $LOOK$ phase. From the internal light it decides whether to move closer to the middle robot or move away from the middle robot. Let $C,M$ be the position of the current terminal robot and $r_m$ in a snapshot. If the internal light is $NIL$ then it determines that in this round it has to move away $CM$ distance from the middle robot along $L$, and if the internal light is $RED$ then it determines that in this round it has to move closer $\frac{CM}2$ distance towards the middle robot along $L$.

From the above discussion it is easy to observe that $IOP$ is solvable  by algorithm $AlgoIOP$ in $\mathcal{FSTA}^A$. We record this result in Lemma \ref{lemma5} .
\begin{lemma}\label{lemma5}
  $\forall$  $R\in \mathcal{R}_3$, $IOP\in\mathcal{FSTA}^A(R)$.
\end{lemma}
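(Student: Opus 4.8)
The plan is to show directly that, under any adversarial asynchronous schedule, each terminal robot $r_i$ realizes the oscillation $x_i \to 2x_i \to x_i \to \cdots$ and then to read off the required time sequence $t^i_0, t^i_1, t^i_2, \dots$ from its successive moves. First I would record the structural invariants that make the algorithm well defined. Since $r_m$ never moves and each terminal robot travels only along $L$ staying on its own side of $r_m$ (its distance to $r_m$ never drops to $0$, as it stays in $[x_i, 2x_i]$ with $x_i>0$), the linear order $r_1, r_m, r_2$ on $L$ is preserved forever. Consequently, in every Look phase a terminal robot correctly recognizes itself as terminal, identifies $r_m$ as its unique closest robot, and reads $CM$ equal to its current distance $d_i$ to $r_m$; the middle robot, recognizing itself as such, stays put and keeps its color, so it is genuinely a fixed landmark.

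Next I would prove the one-bit light invariant by induction on the cycles of a single $r_i$: at the start of each of its cycles, $r_i$ has light NIL iff $d_i = x_i$ and light RED iff $d_i = 2x_i$. The base case is the initial configuration (NIL, $d_i = x_i$). For the step, when the color read at the start of a cycle is NIL the robot moves away by $CM = x_i$, reaching $2x_i$ and setting the light to RED, whereas when it is RED the robot moves closer by $\frac{CM}{2} = \frac{2x_i}{2} = x_i$, returning to $x_i$ and resetting the light to NIL (the two guarded branches are understood as mutually exclusive, branching on the color observed at the beginning of the cycle). The point worth stressing is that $r_i$ never stores $x_i$: doubling the current distance and then halving it is exactly what restores the original gap, so a single persistent bit together with the current geometry suffices. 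Defining $t^i_{2k-1}$ and $t^i_{2k}$ as the completion times of the $k$-th move-away and the $k$-th move-closer respectively (and $t^i_0 = t_0$), the invariant gives $d_i(t^i_{2k}) = x_i = d_i(t^i_0)$ and $d_i(t^i_{2k-1}) = 2x_i = 2\, d_i(t^i_0)$. Because each move is a rigid, straight-line displacement along $L$ directly toward or away from the fixed $r_m$, and because between two completed moves $d_i$ is constant, $d_i$ is monotone on each leg, which yields the two monotonicity conditions of Definition \ref{IOP}.

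The only genuinely delicate point, and the step I expect to need the most care, is arguing that the asynchronous scheduler cannot corrupt this behavior. Here the fixed middle robot is the crucial device: each terminal robot's computed destination depends solely on its own persistent light and on the time-invariant position of $r_m$, and never on the other terminal robot. Hence the two terminal robots oscillate completely independently, so there is no inter-robot coordination that the adversary could disrupt. Moreover a robot is never reactivated inside its own $LCM$ cycle, so between its Look and its Move neither it nor $r_m$ changes position; the destination computed from its (possibly long-delayed) snapshot therefore remains exactly correct at Move time, and rigidity guarantees arrival at precisely $x_i$ or $2x_i$. Finally, fairness forces each $r_i$ to be activated infinitely often, so the sequences $t^i_0, t^i_1, \dots$ are infinite and monotonically increasing. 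Combining the three arguments yields $IOP \in \mathcal{FSTA}^A(R)$ for every $R \in \mathcal{R}_3$.
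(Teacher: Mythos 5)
Your proposal is correct and takes essentially the same route as the paper, which merely describes \emph{AlgoIOP} and declares its correctness ``easy to observe'': your light--distance invariant ($\mathit{NIL} \Leftrightarrow d_i = x_i$, $\mathit{RED} \Leftrightarrow d_i = 2x_i$), the observation that each terminal robot's destination depends only on its own persistent light and the stationary $r_m$, and the monotonicity bookkeeping for the sequence $t^i_0, t^i_1, \dots$ are exactly the details the paper leaves implicit. Your reading of the two guards as mutually exclusive, branching on the color observed at the start of the cycle, is also the intended semantics of the pseudocode, which as literally written would fall through from the $\mathit{NIL}$ branch into the $\mathit{RED}$ branch.
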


\begin{lemma}\label{lemma6}
  $\exists$  $R\in \mathcal{R}_3$, $IOP\not\in\mathcal{FCOM}^S(R)$.
\end{lemma}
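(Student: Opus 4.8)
The plan is to prove Lemma \ref{lemma6} by a standard indistinguishability (adversary) argument: I exhibit a concrete three‑robot instance of $IOP$ and show that every deterministic $\mathcal{FCOM}^S$ algorithm is forced, at some reachable configuration, to make an incorrect move. I take the symmetric initial placement $x_1=x_2=x$ (three collinear robots with $r_m$ equidistant from $r_1$ and $r_2$); this is an admissible initial configuration for $IOP$, so any algorithm claiming to solve $IOP$ for this team must handle it. Assume for contradiction that a deterministic algorithm $A$ solves it.

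First I would pin down exactly what a terminal robot can base its decision on. In $\mathcal{FCOM}$ the robot $r_1$ sees the positions of the other two robots (hence the scale‑free configuration, since the robots' coordinate systems need not share a unit of length) together with the \emph{lights of $r_m$ and $r_2$}, but \emph{not} its own light and not any history; moreover the algorithm must be uniform over all admissible placements, so it cannot be tuned to the unknown value $x$. The required action is determined solely by $r_1$'s phase: move away when at distance $x$ (phase $B$), move closer when at distance $2x$ (phase $D$). The key reference‑ignorance observation is that $r_1$ cannot recover its phase from its own position, because the base distance $x$ is an arbitrary, forgotten initial datum. A short case analysis over the four global phase combinations shows that the only two configurations giving $r_1$ the \emph{same} scale‑free view are $(B,B)$ (both terminals at base) and $(D,D)$ (both at double), since their ratios $x:x$ and $2x:2x$ coincide while every other combination yields a distinct ratio. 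Thus at this ambiguous ratio $r_1$'s decision depends \emph{only} on the two witness lights, yet in $(B,B)$ it must move away and in $(D,D)$ it must move closer.

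The heart of the proof is then to reach both a $(B,B)$‑instant and a $(D,D)$‑instant with \emph{identical} witness lights $(\text{light of }r_m,\ \text{light of }r_2)$; determinism of $A$ would then force the same output in two situations demanding opposite moves, the desired contradiction. I would build two fair executions (or two branches sharing a prefix) in which the middle robot $r_m$ is kept \emph{idle} throughout the relevant window, so its light is a fixed constant at every instant I compare — this is legitimate in SSYNCH, since the adversary may postpone $r_m$ and still activate it later to preserve fairness, and the contradiction occurs at a finite round. The $r_m$‑component then matches automatically, and the whole problem reduces to matching the \emph{peer} light of $r_2$ at an $r_1$‑base instant and an $r_1$‑double instant. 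Since at the ambiguous ratio $r_2$ is forced into the same phase as $r_1$, I must make $r_2$ display the same color while at base and while at double. I would do this with a desynchronizing interleaving: activate $r_1$ and $r_2$ separately so their phases drift apart, cycle $r_2$ through its own oscillation while steering the inputs it reads (the peer light and the scale‑free positions), and invoke finiteness of the color set to force the needed recurrence.

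The main obstacle is exactly this last step: a correct‑looking algorithm will try to make $r_2$ advertise \emph{disjoint} color sets in its base and double phases, which is precisely what blocks a naive matching. The way around it is to exploit that \emph{$r_2$ is subject to the same self‑blindness and reference‑ignorance as $r_1$}: no robot can certify a terminal's phase from geometry alone, and with $r_m$ held stale, $r_2$ cannot maintain a phase‑faithful light once the adversary decouples it from $r_1$. It is precisely the \emph{independence} requirement of $IOP$ — each terminal must oscillate on its own schedule — that grants the adversary enough freedom to drive $r_2$ into a double‑phase state wearing a base‑phase color. I expect the bulk of the technical work, and the only genuinely delicate part, to be this coaxing of the peer's light, formalized by tracking the finite light‑dynamics of the two terminals with $r_m$ frozen and applying a pigeonhole argument to locate the color collision; the positional case analysis and the determinism contradiction that surround it are routine.
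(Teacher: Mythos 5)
Your proposal has a genuine gap at exactly the step you flag yourself: the ``coaxing of the peer's light.'' Your contradiction needs a $(B,B)$-instant and a $(D,D)$-instant at which $r_1$'s \emph{entire} view coincides, which (with $r_m$ frozen) reduces to forcing $r_2$ to wear the same color while at base and while at double. You offer only the hope that ``finiteness of the color set'' plus a pigeonhole will produce this collision, and you concede that a reasonable algorithm will try to keep the color sets of $r_2$'s two phases disjoint. Pigeonhole alone cannot close this: recurrence of a color does not by itself locate that color simultaneously in both phases, and nothing in your sketch constructs the interleaving that would. So as written the proposal establishes the within-instance ambiguity of $(B,B)$ versus $(D,D)$ and the reduction to a light collision, but never derives a contradiction.

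The paper's proof shows this entire machinery is unnecessary, because it deploys your own ``reference-ignorance'' observation \emph{across instances} rather than within the symmetric instance, and it strikes before any visible light has changed. The adversary activates only $r_1$ at two consecutive rounds starting from the initial configuration. At the first activation the per-activation clause of $IOP$ (a robot activated at distance $x_i$ must move away by $x_i$) forces $r_1$ to move to distance $2x_1$; since $r_2$ and $r_m$ were never activated, their external lights are still the initial color, and $r_1$ cannot see its own (possibly updated) light. Hence at the second activation $r_1$'s view --- distances $(2x_1,x_2)$, others' lights initial --- is \emph{identical} to the initial view it would have in the perfectly legitimate instance with base distances $(2x_1,x_2)$, where the unique correct action is again ``move away.'' Determinism and uniformity over instances force $r_1$ to move away a second time, violating $IOP$ in the original instance. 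Note the two structural differences from your plan: the ambiguity exploited is not $(B,B)$ versus $(D,D)$ in one instance but ``any view of a terminal robot is consistent with being at base in some instance,'' which holds for \emph{every} ratio, not just ratio $1$; and by placing the attack at rounds one and two, the lights of $r_2$ and $r_m$ are trivially matched (both initial), so no collision argument, no desynchronization of the two terminals, and no analysis of the lights' finite dynamics is needed. If you want to salvage your route, you would have to actually prove the collision claim against all algorithms, including those that try to make $r_1$'s light certify $r_2$'s phase --- a circularity you would then need to break by essentially the same cross-instance scaling argument the paper uses at the outset.
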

\begin{proof}
If possible let there be an algorithm \texttt{A} which solves the $IOP$ in $\mathcal{FCOM}^S$. If both of the terminal robots do not move ever, then it contradicts the correctness of the algorithm \texttt{A}. Let at $k^{th}$ round $r_1$ makes the first non null movement, according to the problem that is, to move $x_1$ distance away from $r_m$ along $L$.  If the scheduler activates only $r_1$ at both the $k^{th}$ and $(k+1)^{th}$ round, then $r_1$ will not be able to differentiate between the situation in $k^{th}$ round and $(k+1)^{th}$ round and in round $k+1$, it moves $2x_1$ distance away from $r_m$ along $L$, because the distances of the robots from each other at any particular round is arbitrary and external lights of $r_2$ and $r_m$ are same in the both $k$ and $(k+1)^{th}$ round. This contradicts the correctness of \texttt{A}.\qed   
\end{proof}

\begin{lemma}\label{lemma70}
\cite{Flocchiniopodis} $\exists R\in\mathcal{R}_3, -IL\not\in\mathcal{FSTA}^F(R)$.
\end{lemma}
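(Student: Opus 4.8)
The plan is to prove the impossibility by an indistinguishability (symmetry) argument that isolates exactly the piece of information an $\mathcal{FSTA}$ robot lacks but an $\mathcal{FCOM}$ robot has, namely the colours of the other robots. I fix the concrete three-robot instance of $-IL$ and focus on the second move, the transition from Configuration II to Configuration III. The geometric fact I would extract from Figure 2 is that Configuration II is the collinear arrangement in which $b$ sits at the midpoint between $a$ and $c$, so that the configuration is invariant under the rotation $\rho$ by $180^\circ$ about the position of $b$, which swaps $a$ and $c$. By contrast, the target Configuration III is \emph{not} $\rho$-invariant: moving $b$ toward the $a$-side and moving it toward the $c$-side produce the two $\rho$-related outcomes, and only one of them is the required pattern. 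This is precisely the branch in which Algorithm $COMIL$ uses the visible colour $M$ of $a$ to identify the position $P_2$, so the dependence of $P_2$ on which endpoint is $a$ is exactly what I need.

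First I would assume, for contradiction, that some algorithm $A$ solves $-IL$ in $\mathcal{FSTA}^F$. Because the schedule is fully synchronous and $A$ is deterministic, the run from Configuration I is forced: it must reach Configuration II, after which $b$ is required to move to $P_2$. Next I would describe $b$'s input in $\mathcal{FSTA}$ at that moment: a snapshot consisting only of the anonymous, colourless positions of $a$ and $c$, expressed in $b$'s adversarially chosen local frame, together with $b$'s own finite colour. The key step is to observe that this snapshot is $\rho$-invariant, since $\rho$ permutes the two other robots' positions among themselves and $b$ cannot see their lights, while $b$'s own colour carries no usable directional information because the robots are disoriented (the adversary may re-choose $b$'s frame each cycle, so a bounded colour cannot encode a physical direction). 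I would then run the standard two-execution coupling: the two frame choices for $b$ that differ by $\rho$ yield identical snapshots but antipodal global destinations, so $A$ sends $b$ to a single point in one frame and to its $\rho$-image in the other. Since the correct $P_2$ is not $\rho$-invariant, $b$ is wrong in at least one of the two executions, contradicting correctness of $A$; hence $-IL\notin\mathcal{FSTA}^F(R)$ for this $R\in\mathcal{R}_3$.

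The main obstacle is the symmetry-breaking step, i.e.\ arguing rigorously that $b$ has \emph{no} device left to tell $a$ from $c$ in Configuration II. Two points need care. First, chirality does not help, because $\rho$ is orientation-preserving; the reflective symmetries of the collinear configuration, which chirality would break, are irrelevant here, and it is the rotational symmetry that blocks $b$. Second, persistent memory does not help either: although $b$ can distinguish $a$ from $c$ earlier in Configuration I (that arrangement has no nontrivial orientation-preserving self-symmetry), it can only carry a bounded colour forward, and under disorientation together with the absence of positional memory that colour cannot be converted back into which current endpoint is $a$. Making this last point airtight, that no finite colouring scheme set up in Configuration I survives the frame change to resolve the Configuration II symmetry, is the crux, and it is exactly the gap that $\mathcal{FCOM}$ closes by letting $b$ read the colours $M$ and $\mathit{NIL}$ directly, which is why the same problem is solvable there.
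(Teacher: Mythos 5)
This lemma is not proved in the paper at all --- it is imported directly from \cite{Flocchiniopodis} as a citation --- and your reconstruction is exactly the indistinguishability argument underlying the cited result: Configuration~II is invariant under the $180^\circ$ rotation $\rho$ about $b$, the required destination $P_2$ is anchored to which endpoint is $a$ (as the paper's own description of $COMIL$ confirms, $b$ ``identifies'' $P_2$ only via $a$'s light $M$), and the two-execution frame-flip coupling shows that $b$'s snapshot-plus-own-colour input is identical in both runs, forcing a $\rho$-related, hence wrong, destination in one of them. Your two cautionary remarks are also exactly where the argument's validity lives: chirality is powerless against the orientation-preserving $\rho$, and the per-cycle adversarial re-choice of local frames is genuinely essential (with a persistent frame, $b$ could encode in Configuration~I a single bit recording in which local half-plane $a$ will land, and recover $a$'s side in Configuration~II, so $-IL$ would become solvable in $\mathcal{FSTA}^F$), which matches the variable-disorientation model of \cite{Flocchiniopodis}.
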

\begin{lemma}\label{lemma80}
\cite{Flocchiniopodis} $\forall R\in\mathcal{R}_3, -IL\in\mathcal{FCOM}^S(S)$.
\end{lemma}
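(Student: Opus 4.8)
The plan is to obtain this lemma as an immediate corollary of the \emph{stronger} asynchronous result already established in Lemma~\ref{lma3}, rather than by exhibiting a fresh semi-synchronous algorithm. The key observation is that the asynchronous scheduler is the most adversarial of the three: every semi-synchronous schedule can be realized as an asynchronous one, so any algorithm that is correct against the asynchronous adversary is automatically correct against the semi-synchronous adversary.

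First I would invoke the monotonicity recorded in the preliminaries, namely that for every model $M\in\mathcal{M}$ one has $M^{F}\geq M^{S}\geq M^{A}$. Unwinding the definition of $\geq$, the inequality $M^{S}\geq M^{A}$ states precisely that $Task(M,SSYNCH;R)\supseteq Task(M,ASYNCH;R)$ for every team $R\in\mathcal{R}$; taking $M=\mathcal{FCOM}$ this reads $\mathcal{FCOM}^{S}(R)\supseteq\mathcal{FCOM}^{A}(R)$.

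Next I would apply this inclusion to the problem $-IL$. By Lemma~\ref{lma3}, Algorithm $COMIL$ solves $-IL$ in $\mathcal{FCOM}^{A}$ for every $R\in\mathcal{R}_3$, i.e.\ $-IL\in\mathcal{FCOM}^{A}(R)$. Combining this membership with the inclusion $\mathcal{FCOM}^{A}(R)\subseteq\mathcal{FCOM}^{S}(R)$ yields $-IL\in\mathcal{FCOM}^{S}(R)$ for every $R\in\mathcal{R}_3$, which is exactly the statement of the lemma. (The original argument in \cite{Flocchiniopodis} instead builds a direct $\mathcal{FCOM}^{S}$ algorithm, but here the asynchronous solution already subsumes the semi-synchronous one.)

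I do not anticipate any genuine obstacle, since all the mathematical content resides in Lemma~\ref{lma3}. The only point worth verifying is that the correctness argument for $COMIL$ never \emph{requires} asynchrony: inspecting that argument, it relies solely on rigid mobility (each activated robot completes its computed move atomically) together with a case analysis on which robot is activated and on the light colours it observes, and none of these hypotheses can be violated by passing to the more benign semi-synchronous scheduler. Hence the reduction is sound and the lemma follows.
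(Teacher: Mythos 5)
Your proposal is correct, but it takes a genuinely different route from the paper: the paper does not prove this lemma at all --- it is imported verbatim from \cite{Flocchiniopodis}, where it is established by exhibiting a direct semi-synchronous algorithm for $-IL$. You instead derive it internally, as a corollary of Lemma \ref{lma3} (correctness of $COMIL$ in $\mathcal{FCOM}^{A}$) combined with the scheduler monotonicity $M^{F} \geq M^{S} \geq M^{A}$ recorded in the preliminaries, which for $M = \mathcal{FCOM}$ gives exactly the inclusion $\mathcal{FCOM}^{A}(R) \subseteq \mathcal{FCOM}^{S}(R)$ you need. The reduction is sound: every semi-synchronous execution is a legal asynchronous execution (the asynchronous adversary may always schedule instantaneous, round-synchronized cycles while respecting fairness), so correctness against the \textsc{Asynch} adversary subsumes correctness against \textsc{Ssynch}; and there is no circularity, since the paper's proof of Lemma \ref{lma3} nowhere invokes Lemma \ref{lemma80}. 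As for what each approach buys: your derivation makes the lemma self-contained within this paper and renders the citation logically redundant, at the cost of making the downstream theorem $\mathcal{FCOM}^{S} \perp \mathcal{FSTA}^{A}$ depend on the correctness of the new algorithm $COMIL$, whereas the paper's citation keeps this lemma resting only on the previously verified construction of \cite{Flocchiniopodis}, independent of anything new in this paper. A minor point: the ``$(S)$'' in the statement is evidently a typo for ``$(R)$'', and your reading of the lemma as the universally quantified membership $-IL \in \mathcal{FCOM}^{S}(R)$ for all $R \in \mathcal{R}_3$ is the intended one.
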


Next in \cite{Flocchiniopodis} Lemma \ref{lemma70} and \ref{lemma80} are shown regarding the $-IL$ problem which prove that $-IL$ problem can not be solved in $\mathcal{FSTA}^F$, hence in $\mathcal{FSTA}^A$  but can be solved in $\mathcal{FCOM}^S$. Also in this paper we have shown that $-IL$ is solvable in $\mathcal{FCOM}^A$. Therefore from Lemmas \ref{lemma5}-\ref{lemma6} and \ref{lemma70}-\ref{lemma80}, we can conclude:

\begin{theorem}
$\mathcal{FCOM}^S\perp\mathcal{FSTA}^A.$
\end{theorem}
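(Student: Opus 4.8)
The plan is to unpack the definition of computational incomparability: to establish $\mathcal{FCOM}^S \perp \mathcal{FSTA}^A$ I must exhibit two witness problems, one solvable under $\mathcal{FCOM}^S$ but not under $\mathcal{FSTA}^A$, and one solvable under $\mathcal{FSTA}^A$ but not under $\mathcal{FCOM}^S$. The two problems $-IL$ and $IOP$, together with the four lemmas already assembled in this section, supply exactly these witnesses, so the proof amounts to combining them correctly and invoking the definition of orthogonality.

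First I would handle the direction $Task(\mathcal{FCOM}, S; R) \setminus Task(\mathcal{FSTA}, A; R) \neq \emptyset$ using the problem $-IL$ on a team of three robots. By Lemma \ref{lemma80}, $-IL$ is solvable in $\mathcal{FCOM}^S$. For the non-solvability side, Lemma \ref{lemma70} gives that $-IL$ cannot be solved in $\mathcal{FSTA}^F$; I would then invoke the trivial monotonicity $\mathcal{FSTA}^F \geq \mathcal{FSTA}^A$, i.e.\ $\mathcal{FSTA}^F(R) \supseteq \mathcal{FSTA}^A(R)$, to descend to the asynchronous setting and conclude $-IL \notin \mathcal{FSTA}^A(R)$. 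This yields the first witness.

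Next I would handle the reverse direction $Task(\mathcal{FSTA}, A; R) \setminus Task(\mathcal{FCOM}, S; R) \neq \emptyset$ using the problem $IOP$, again on a team of three robots. Here Lemma \ref{lemma5} provides solvability in $\mathcal{FSTA}^A$ via the algorithm $AlgoIOP$, while Lemma \ref{lemma6} provides non-solvability in $\mathcal{FCOM}^S$. This furnishes the second witness, and the two witnesses together meet the definition of incomparability, completing the proof.

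Since every component is already established in the cited lemmas, there is no genuine obstacle here; the only point demanding care is the descent from $\mathcal{FSTA}^F$ to $\mathcal{FSTA}^A$ in the first direction, where I must apply the monotonicity in the correct direction — impossibility in the stronger fully-synchronous setting forces impossibility in the weaker asynchronous setting, and not conversely.
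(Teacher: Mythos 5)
Your proof is correct and follows exactly the paper's own route: the witness $-IL$ via Lemmas \ref{lemma80} and \ref{lemma70} (descending from $\mathcal{FSTA}^F$ to $\mathcal{FSTA}^A$ by monotonicity), and the witness $IOP$ via Lemmas \ref{lemma5} and \ref{lemma6}. You are in fact slightly more careful than the paper in making the monotonicity step and the direction of each set difference explicit.
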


Also, given the fact that $IOP$ cannot be solved in $\mathcal{FCOM}^S$ implies that $IOP$ cannot be solved in $\mathcal{FCOM}^A$, we get the following result:

\begin{theorem}
$\mathcal{FCOM}^A\perp\mathcal{FSTA}^A.$
\end{theorem}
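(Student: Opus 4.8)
The plan is to unfold the definition of orthogonality and exhibit two separating problems, one for each direction of non-inclusion, both realised on teams of three robots. Concretely, I would use $-IL$ as the witness for $\mathcal{FCOM}^A \not\leq \mathcal{FSTA}^A$ and $IOP$ as the witness for $\mathcal{FSTA}^A \not\leq \mathcal{FCOM}^A$; producing both separating instances is exactly what the definition of $\perp$ requires, so no further machinery is needed once the two are in place.

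For the first direction I would invoke Lemma \ref{lma3}, which supplies the explicit asynchronous algorithm $COMIL$ solving $-IL$ in $\mathcal{FCOM}^A$, together with Lemma \ref{lemma70}, which asserts that $-IL$ is unsolvable in $\mathcal{FSTA}^F$. Since the scheduler hierarchy gives $\mathcal{FSTA}^F \geq \mathcal{FSTA}^A$, any problem unsolvable under the stronger fully-synchronous scheduler is unsolvable asynchronously as well; hence for $R \in \mathcal{R}_3$ we have $-IL \in \mathcal{FCOM}^A(R)$ but $-IL \notin \mathcal{FSTA}^A(R)$, which furnishes the required $R_1$.

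For the second direction I would use Lemma \ref{lemma5}, giving the algorithm $AlgoIOP$ that solves $IOP$ in $\mathcal{FSTA}^A$, and Lemma \ref{lemma6}, which rules out $IOP$ in $\mathcal{FCOM}^S$. Because $\mathcal{FCOM}^S \geq \mathcal{FCOM}^A$, impossibility under the semi-synchronous scheduler transfers downward to the asynchronous one, so $IOP \in \mathcal{FSTA}^A(R)$ while $IOP \notin \mathcal{FCOM}^A(R)$, supplying the required $R_2$. Combining the two witnesses gives $\mathcal{FCOM}^A \perp \mathcal{FSTA}^A$ directly from the definition.

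The step that genuinely carries weight, and where I expect the only real subtlety, is the asymmetry in how the scheduler hierarchy interacts with solvability. Impossibility propagates downward for free, since $\mathcal{FSTA}^F \geq \mathcal{FSTA}^A$ and $\mathcal{FCOM}^S \geq \mathcal{FCOM}^A$, and this is all the $IOP$ direction needs. Solvability, however, does \emph{not} propagate downward: knowing $-IL \in \mathcal{FCOM}^S$ from Lemma \ref{lemma80} tells us nothing about $\mathcal{FCOM}^A$. This is precisely why the first direction cannot merely be inherited from the preceding theorem $\mathcal{FCOM}^S \perp \mathcal{FSTA}^A$ and must instead rely on the paper's own asynchronous construction $COMIL$ of Lemma \ref{lma3}; the substantive content underpinning this theorem is therefore the verification that $COMIL$ remains correct against an adversarial asynchronous activation schedule.
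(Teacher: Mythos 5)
Your proposal is correct and follows essentially the same route as the paper: the two witnesses are exactly the paper's, namely $-IL$ (solvable in $\mathcal{FCOM}^A$ by the algorithm $COMIL$ of Lemma \ref{lma3}, unsolvable in $\mathcal{FSTA}^F$ by Lemma \ref{lemma70} and hence in $\mathcal{FSTA}^A$) and $IOP$ (solvable in $\mathcal{FSTA}^A$ by Lemma \ref{lemma5}, unsolvable in $\mathcal{FCOM}^S$ by Lemma \ref{lemma6} and hence in $\mathcal{FCOM}^A$). Your closing remark that solvability does not transfer downward, so the $-IL$ direction must rest on the asynchronous construction of Lemma \ref{lma3} rather than on Lemma \ref{lemma80}, matches the paper's own reasoning precisely.
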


\section{Conclusion}
In this paper we have extended the study of computational relationship between models of \cite{Flocchiniopodis} under asynchronous scheduler. It has only been recently proved in \cite{Flocchiniopodis} that $\mathcal{LUMI}^{A}$ $\perp$ $\mathcal{OBLOT}^{F}$, which had been an long-standing open problem. In this paper we have further refined that result by   proving that $\mathcal{FCOM}^{A}$ $\perp$ $\mathcal{OBLOT}^{F}$, and $\mathcal{FSTA}^{A}$ $\perp$ $\mathcal{OBLOT}^{F}$,  as $\mathcal{FCOM}^{A}$ and $\mathcal{FSTA}^{A}$ are sub-models of $\mathcal{LUMI}^{A}$, with either only the power of communication or memory compared to the power of both communication and memory of $\mathcal{LUMI}^{A}$. In fact we have gone one step further and even proved that $\mathcal{FCOM}^{A}$ $\perp$ $\mathcal{FSTA}^{F}$. This is in contrast to the result $\mathcal{FCOM}^{F}$ $>$ $\mathcal{FSTA}^{A}$.  Perhaps the most significant contribution of our paper is to find out the exact computational relationship between $\mathcal{FSTA}^{A}$ and $\mathcal{FCOM}^{A}$. In this paper we  have proved that, $\mathcal{FSTA}^{A} \perp \mathcal{FCOM}^{A}$, proving that the finite memory model and the finite communication model are incomparable under asynchronous scheduler. Still there are many relations whose nature is still yet to be resolved. The relations that are yet to be resolved are: 

\begin{enumerate}
    \item $\mathcal{OBLOT}^{S} \ Vs.\  \mathcal{OBLOT}^{A}$
    \item $\mathcal{FSTA}^{S} \ Vs.\  \mathcal{FSTA}^{A}$
    \item $\mathcal{FSTA}^{A} \ Vs.\  \mathcal{OBLOT}^{S}$
    \item $\mathcal{FCOM}^{S} \ Vs.\  \mathcal{FCOM}^{A}$
    \item $\mathcal{FCOM}^{A} \ Vs.\  \mathcal{OBLOT}^{S}$
\end{enumerate}

The first of these questions is answered as $\mathcal{OBLOT}^{S} > \mathcal{OBLOT}^{A}$ in graph domain \cite{NavarraInfComp2018}. But in the background of continuous setting, where the robots can detect multiplicity, the question is yet to be resolved.
The rest of the questions remain open for future investigations.
Also we assumed in our setting rigid movements and the presence of chirality. It would be interesting to explore the nature of the relations by changing one or both the assumptions. In the papers \cite{LEnhancedRobotGNavarra} and  \cite{NavarraInfComp2018} characterization of some of the relations have been done in graph environment or discrete setting. Also, it would be interesting to characterize the remaining relations of the model in discrete setting. 

\paragraph{Acknowledgement}
Partial work of this paper was communicated to the conference $CALDAM,2021$, so the authors would like to thank the anonymous reviewers of that conference for their valuable comments which have helped us to improve this work.

\bibliographystyle{plain}
\bibliography{finalcomp}

\end{document}